\newcommand{\size}[1]{\left| #1 \right|}
\newcommand{\E}{\mathbb{E}}
\renewcommand{\hat}{\widehat}
\newcommand{\remove}[1]{}
\newcommand{\R}{\mathbb{R}}
\newcommand{\N}{\mathbb{N}}
\newcommand{\cS}{\mathcal{S}}
\newcommand{\cT}{\mathcal{T}}
\newcommand{\cA}{\mathcal{A}}
\newcommand{\cC}{\mathcal{C}}
\newcommand{\cE}{\mathcal{E}}
\newcommand{\cF}{\mathcal{F}}
\newcommand{\cP}{\mathcal{P}}
\newcommand{\cG}{\mathcal{G}}
\newcommand{\cR}{\mathcal{R}}
\newcommand{\Oh}{\mathcal{O}}
\newcommand{\tOh}{\widetilde{{\mathcal O}}}
\newcommand{\cH}{\mathcal{H}}
\newcommand{\hs}{HS}
\newcommand{\eps}{\epsilon}
\newcommand{\pr}{\mathbb{P}}
\theoremstyle{plain}
\newtheorem{theo}{Theorem}[section]
\newtheorem{lem}[theo]{Lemma}
\newtheorem{pre}[theo]{Proposition}
\newtheorem{coro}[theo]{Corollary}
\newtheorem{cl}[theo]{Claim}
\theoremstyle{definition}
\newtheorem{defi}[theo]{Definition}
\newtheorem{rem}[theo]{Remark}
\newtheorem{obs}[theo]{Observation}
\newcommand{\phs}{$d$-{\sc Promise-Hitting-Set}}
\newcommand{\npphs}{$d$-{\sc Hitting-Set}}
\newcommand{\dphs}{$d$-{\sc Decision-Hitting-Set}}
\newcommand{\bis}{{\sc BIS}}
\newcommand{\gpis}{{\sc GPIS}}
\newcommand{\gpise}{{\sc GPISE}}
\newcommand{\defproblem}[3]{
  \vspace{1mm}
\noindent\fbox{
  \begin{minipage}{0.96\textwidth}
  \begin{tabular*}{\textwidth}{@{\extracolsep{\fill}}lr} #1 \\ \end{tabular*}
  {\bf{Input:}} #2  \\
  {\bf{Output:}} #3
  \end{minipage}
  }
  \vspace{1mm}
}
\renewcommand{\hat}{\widehat}
\renewcommand{\tilde}{\widetilde}
\title{
      Almost optimal query algorithm for hitting set\\
      using a subset query \footnote{A preliminary version of the work appeared in ISAAC 2018.}
      }
\author{
Arijit Bishnu
\footnote{
Indian Statistical Institute, Kolkata, India
}
\and
Arijit Ghosh
\footnotemark[1]
\and
Sudeshna Kolay
\footnote{Indian Institute of Technology Kharagpur, India}
\and
Gopinath Mishra
\footnote{University of Warwick, UK. Research supported in part by the Centre for Discrete Mathematics and its Applications (DIMAP), by EPSRC award EP/V01305X/1.}
\and 
Saket Saurabh
\footnote{The Institute of Mathematical Sciences, HBNI, Chennai, India}
}
\date{}
\begin{document}

\maketitle

\begin{abstract}

In this paper, we focus on {\sc Hitting-Set}, a fundamental problem in combinatorial optimization, through the lens of sublinear time algorithms. Given access to the hypergraph through a subset query oracle in the query model, we give sublinear time algorithms for {\sc Hitting-Set} with almost tight parameterized query complexity. In \emph{parameterized query complexity}, we estimate the number of queries to the oracle based on the parameter $k$, the size of the {\sc Hitting-Set}.
The subset query oracle we use in this paper is called Generalized $d$-partite Independent Set query oracle (GPIS) and it was introduced by Bishnu et al.~(ISAAC'18). GPIS is
a generalization to hypergraphs of the Bipartite Independent Set query oracle (BIS) introduced by Beame et al.~(ITCS'18 and TALG'20) for estimating the number of edges in graphs. Since its introduction GPIS query oracle has been used for estimating the number of hyperedges independently by Dell et al.~(SODA'20 and SICOMP'22) and Bhattacharya et al.~(STACS'22), and for estimating the number of triangles in a graph by Bhattacharya et al.~(ISAAC'19 and TOCS'21).
Formally, GPIS is defined as follows: 
\begin{center}
  {\em \gpis{} oracle for a $d$-uniform hypergraph $\mathcal{H}$ takes as input $d$ pairwise disjoint non-empty \\
  subsets $A_1, \ldots, A_d$ of vertices in $\cal H$ and answers whether there is a\\ 
  hyperedge in $\mathcal{H}$ that intersects each set $A_i$, where $i \in \{1, \, 2, \, \ldots, d\}$. }
\end{center}
For $d=2$, the \gpis{} oracle is nothing but \bis{} oracle.

We show that \npphs{}, the hitting set problem for $d$-uniform hypergraphs, can be solved using $\tOh_d(k^{d} \log n)$ \gpis{} queries. Additionally, we also showed that \dphs{}, the decision version of \npphs{} can be solved with $\tOh_d\left( \min \left\{ k^d\log n, k^{2d^2} \right\} \right)$ \gpis{} queries.  We complement these parameterized upper bounds with an almost matching parameterized lower bound that states that any algorithm that solves \dphs{} requires $\Omega \left( \binom{k+d}{d} \right)$ \gpis{} queries.

    \paragraph{Keywords.} Query complexity, subset queries, hitting set, parameterized complexity
\end{abstract}

\section{Introduction}\label{sec:intro}

\noindent
In query complexity models for graph problems, the aim is to design algorithms that have access to the vertices $V(G)$ of a graph $G$, but not the edge set $E(G)$. Instead, these algorithms construct local copies of the graph by using oracles to probe $G$ and infer about a property of a part of the graph. Due to the lack of knowledge about the edges of the graph, often it is difficult to design algorithms even for problems that are classically known to have polynomial time algorithms. 

A natural optimization question in this model is to minimize the number of queries made on a worst case input to the oracle to solve the problem at hand. This has spawned the field of {\em query complexity}. The query complexity of an algorithm is the number of queries made to the oracle. Keeping this in mind, several query models have been designed through the years that strike a balance between not revealing too much information and revealing enough information per query to reduce the number of queries to solve a particular problem. 

There is a vast literature available on the query complexity of problems with classical polynomial time algorithms (refer to book~\cite{Oded17}). There have also been works that look at algorithmically hard problems through the lens of query complexity~\cite{IndykMRVY18, IwamaY18, OnakRRR12}. In this paper, we use ideas of parameterized complexity in order to study the query complexity of an NP-hard problem. The {\sc Hitting Set} (and {\sc Vertex Cover}) problem is a test problem for all new techniques of parameterized complexity and also in every subarea that parameterized complexity has explored. We continue 
this tradition and study the query complexity of {\sc Hitting Set}. Our query model is a generalization of a recently introduced query model by Beame et al.~\cite{talg/BeameHRRS20}.

%In what follows, we describe the relevant model and state our results.

\subsection{The query model}\label{subsec:model}

Given a graph $G$, $V(G)$ and $E(G)$ denote the vertex and edge set, respectively. For an edge $e\in E(G)$ with endpoints $u,v\in V(G)$, we denote $e=(u,v)$. Given a hypergraph $\mathcal{H}$, the vertex set and hyperedge sets are denoted by  $U(\mathcal{H})$ and $\cF(\cH)$, respectively. 
A $d$-uniform hypergraph has exactly $d$ vertices in all its hyperedges.
The set $\{1,2,\ldots,n\}$ is denoted by $[n]$. The vertex set, be it for graphs and hypergraphs, has a cardinality of $n$. Given a finite set $A$, $|A|$ will denote its cardinality. 
For a function $f(k)$, the set of functions of the form $\Oh(f(k)\cdot \log^{c} k)$ where $c$ is an absolute constant, is denoted by $\tOh(f(k))$. Similarly, the set of functions of the form $\tOh\left(c_{d} f(k)\right)$ where $c_{d}$ is a function of $d$, will be denoted by $\tOh_{d}(f(k))$.

Our goal in this paper is to look at the parameterized query complexity of the hitting set problem with oracle access. The specific oracle access we use is known as \gpis{} and was introduced by Bishnu et al.~\cite{DBLP:conf/isaac/BishnuGKM018}. Later, the same oracle was reintroduced by Dell et al.~\cite{DBLP:journals/siamcomp/DellLM22} with the name {\em Colorful Independent Oracle}. The precursor of these query oracle is the \bis{} oracle introduced by Beame et al.~\cite{talg/BeameHRRS20}. We start by defining the \bis{} oracle.  
\begin{description}
\item[\emph{Bipartite independent set oracle} (\bis)~\cite{talg/BeameHRRS20}.] For a graph $G$, given two disjoint non-empty subsets $A,B \subseteq V(G)$ as input, a \bis{} query oracle answers whether there exists an edge $(u,v) \in E(G)$ such that $u \in A$ and $v \in B$.
\end{description}
Given two vertices $u, \, v \in V(G)$, the often used \emph{edge existence query}~\cite{Oded17} asks for an yes/no answer to the question whether there exists an edge between $u$ and $v$. The \bis{} oracle, proposed by Beame et al.~\cite{talg/BeameHRRS20}, is a generalization over the \emph{edge existence query} in the sense that it asks for the existence of an edge between two disjoint sets of vertices. \bis{} was used to estimate the number of edges in a graph in~\cite{talg/BeameHRRS20}. The following oracle is a generalization of \bis{} to the hypergraph setting.
\begin{description}
\item[\emph{Generalized d-partite independent set oracle} (\gpis)~\cite{DBLP:conf/isaac/BishnuGKM018}.] For a $d$-uniform hypergraph $\mathcal{H}$, given $d$  pairwise disjoint non-empty subsets $A_1,A_2,\ldots,A_d \subseteq U(\mathcal{H})$ as input, a \gpis{} query oracle  answers whether there exists a hyperedge $(u_1,\ldots,u_d) \in \mathcal{F}(\mathcal{H})$ such that $u_i \in A_i$, for each $i \in [d]$.
\end{description}
We will get back \bis{} oracle when we substitute $d=2$ in the above definition.

Queries like \emph{degree query}, \emph{edge existence query}, \emph{neighbor query} (see~\cite{Oded17}), that obtain local 
information about the graph have its limitation in terms of not being able to achieve \emph{efficient} 
query costs. For example, the \emph{edge estimation problem}, where the objective is to estimate the number of edges in the graph, has \emph{linear}\footnote{linear in the number of vertices.} query complexity in the worst case~\cite{Feige06,GoldreichR08}. It is a natural question that whether we can solve the problem at hand efficiently with a query access that have more power in the 
sense that it goes beyond obtaining local information and generalizes earlier queries. Beame et al.~\cite{talg/BeameHRRS20} introduced \bis{} query model and gave an algorithm for the edge estimation problem using polylogarithmic \bis{} queries. To get a  better motivation behind \bis{} query, please refer to~\cite{talg/BeameHRRS20}. The independent set based oracles like \bis{} and \gpis{} fall under the category of \emph{group/subset queries}, which was introduced by Stockmeyer~\cite{Stockmeyer85,Stockmeyer83} and formalized by Ron and Tsur~\cite{RonT16}. 

\paragraph*{Use of independent set oracles.} Independent set based oracles mostly report on the intersection of the edge set with set(s) of vertices -- the oracles give a YES/NO answer to the existence of an intersection, in a few cases they even count the number of such intersections. Lately there has been a wide range of interest in them. By now, they have been used for solving a lot of problems -- edge and hyperedge estimation in graphs and hypergraphs~\cite{talg/BeameHRRS20,soda/0001LW20,DBLP:journals/mst/BhattacharyaBGM21,DBLP:journals/siamcomp/DellLM22,abs-1908-04196}, sampling edges and hyperedges~\cite{soda/0001LW20,DBLP:journals/siamcomp/DellLM22}, fine-grained complexity of approximate counting problems~\cite{DellL21}, computing minimum cut~\cite{RubinsteinSW18} and submodular function minimization~\cite{itcs/GraurPRW20} using {\sc Cut} queries; in the {\sc Cut} query for a graph $G$, the oracle takes as input a vertex subset $S \subseteq V(G)$ and outputs the number of cut edges between $S$ and $V(G) \setminus S$.

In order to understand the limitations of independent set based oracles in terms of NP-Hard problems, it is reasonable to study query complexity of their parameterized versions. As {\sc Hitting Set} has been a kind of a test problem for any new area/technique that parameterized complexity has explored, we will focus on the parameterized decision (optimization) version of {\sc Hitting Set} using \gpis{} oracle. Note that Iwama and Yoshida~\cite{IwamaY18} initiated the study of parameterized version of some NP-Hard problems in the graph property testing framework with access to standard oracles, like degree query and neighbor query along with some \emph{added power} to the oracle. We will give the details of their work in Section~\ref{sec:rel}. We believe that apart from the oracles used in~\cite{IwamaY18}, these independent set based query models will be useful to study the (parameterized)  query complexity of other NP-Hard problems.

\paragraph{Efficient implementation of the query oracle.} {\sc GPIS} queries for graphs, that is {\sc BIS} queries, is a special case of the {\em vector-matrix-vector} query supported on the adjacency matrix of the graph. Given a {\em vector-matrix-vector} query access to an unknown matrix $A \in \R^{n\times n}$, {\em vector-matrix-vector} query will output the value of $x^{T} A y$ for any two specified vectors $x$ and $y$ in $\R^{n}$. Recently, {\em vector-matrix-vector} query and similar linear algebraic queries have been used to study properties of unknown matrices~\cite{DBLP:conf/approx/RashtchianWZ20, SunWYZ19,DBLP:conf/approx/BishnuGM21,BakshiCW22,NeedellSW22,BishnuGMP22}. Observe that {\sc BIS} queries, and more generally {\sc GPIS} queries, can be implemented using one {\em vector-matrix-vector} query on the adjacency matrix of the graph. On the implementation side observe that these new linear algebraic queries can be efficiently implemented using specialized hardware or in distributed environments where inner product between two vectors can be implemented efficiently. 
\color{black}
%%%%%%%%%%%%%%%%%

\subsection{Problem definition and our results}
\label{sec:probdefi}

In our framework, the vertices of the hypergraph are known while the hyperedges are unknown, and we have {\sc GPIS} oracle access to the hypergraph.  Broadly the idea is to make specific queries to {\sc GPIS} oracle and use the outcomes of these queries to build a {\em reduced hypergraph instance}. We will then show that for solving the original problem we only need to run traditional (FPT) algorithms on this reduced instance. While stating our results, we will only care about the number of queries required to solve the problem. Our main goal is to understand the query complexity in terms of the input parameters of the problem. Observe that our bounds on the query complexity are not directly comparable with the time complexities of FPT algorithms.

Our algorithms will use the technique of {\em color coding}~\cite{DBLP:conf/stoc/AlonYZ94,DBLP:journals/jacm/AlonYZ95,saketbook15}, and {\em stability} of sunflowers under random sampling. For our lower bounds, we will use the communication complexity framework developed by Eden and Rosenbaum~\cite{DBLP:conf/approx/EdenR18}. 

\paragraph{\npphs{} problem.}
The \npphs{} problem is defined as follows.

\defproblem{\npphs}{The set of vertices $U(\mathcal{H})$ of a $d$-uniform
hypergraph $\mathcal{H}$, access to a \gpis{} oracle, and a positive integer $k$.}{
Among all the subsets of $U(\cH)$ of size at most $k$, output a set $S$ of smallest size such that any hyperedge of $\cH$ intersects with $S$. Otherwise, we report no
such set exist among all the subsets of $U(\cH)$ of size at most $k$.}

Note that, in this paper, we consider $d$ as a constant independent of $k$. \dphs{} is the usual decision version of \npphs{}. The main results of our work are as follows; they include both upper and matching lower bounds for the {\sc Hitting-Set} problem.
\begin{theo}[Upper bounds]
\label{theo:hs_np_bise}
\begin{itemize}

\item[(i)]
    \npphs{} can be solved using $\tOh_d(k^{d} \log n)$  \gpis{} queries with high probability. 

\item[(ii)] 
    \dphs{} can be solved using $\tOh_d\left( \min \left\{ k^d\log n, k^{2d^2} \right\} \right)$ \gpis{} queries with high probability.
\end{itemize}
\end{theo}

\begin{theo}[Lower bound]
\label{theo:lowerbound}
%\label{theo:lb}
 Any algorithm that solves \dphs{}, with probability at least $2/3$, requires $\Omega \left( \binom{k+d}{d} \right)$ \gpis{} queries.% (ii) Any algorithm that solves \npphs{} requires $\Omega(k^d)$ \gpis{} queries.
\end{theo}

%\vspace{1cm}
%\vspace{-0.1in}
\subsection{Related Works}
\label{sec:rel}

\noindent
To the best of our knowledge, the only work prior to ours related to parameterization in the query complexity model was by Iwama and Yoshida~\cite{IwamaY18}. They studied property testing for several parameterized NP optimization problems in the query complexity model. For the query, they could ask for the degree of a vertex, neighbors of a vertex -- both \emph{local queries} and had an added power of sampling an edge uniformly at random. As the probability space is over the entire edge set, asking for a random edge does not qualify to be a local query.
To justify the added power of the oracle to sample edges 
uniformly at random, they have shown that $\Omega(\sqrt{n})$ degree and neighbor queries are required to solve {\sc 
Vertex-Cover}. Apart from that, an important assumption in their work is that the algorithms knew the number of edges, 
which is not what is usually done in query complexity models. Under these assumptions, they study the parameterized query complexity of vertex cover, feedback vertex set, multicut, dominating set and non-existence of paths of specific length and give constant query testable algorithms if the parameter $k$ is treated to be a constant.

Note that our query oracles can access some global information. However, our oracles  do not use  any randomness, does not know the number of edges,  and have a simple unifying structure in terms of asking for the existence of an edge between disjoint sets of vertices. We feel that our work marked by its use of independent set based oracle queries is not comparable to the work by Iwama and Yoshida~\cite{IwamaY18}. We mention in passing that their vertex cover algorithm admits a query complexity of $\tOh(\frac{k 2^k}{\eps^2})$ and either finds a vertex cover of size at most $k$ or decides that there is no vertex cover of size bounded by $k$ even if we delete $\eps m$ 
edges, where the number of edges $m$ is known in advance. In contrast to the work of Iwama and Yoshida~\cite{IwamaY18}, our algorithm uses \bis{} query for the vertex cover problem, consider all instances, and it neither knows the number of edges in the graph in advance nor does it estimate the number of edges in the graph. The query complexity of our algorithm is $\tOh\left(k^{2}\log n\right)$ and we either find a vertex cover of size at most $k$ if it exists or decide that there is no vertex cover of size bounded by $k$. We also prove an almost matching lower bound for the problem.

\subsection*{Organization of the paper}

We discuss preliminaries in Section~\ref{sec:prelim}. In Section~\ref{sec:hs}, we deal with the promise version of the problem leading to the hitting set problem. The decision version of the hitting set problem is discussed in Section~\ref{sec:hs_nonprom}. Section~\ref{sec:lowerbound-restate} has the details of the lower bound and Section~\ref{sec:cut} concludes the paper.

%%%%%%%%%%%%%%%%%%%%%%%%%%%%%%%%%%%%%%%%%%%%%%%%%%%%%%%%
%\complain{\bise{} queries on random partitions of vertices allow us to sample the graph. (I will use this sentence later on.)}
\section{Preliminaries}\label{sec:prelim}
\subsection{Notations and definitions}

\noindent
A \emph{hypergraph} is a set system $(U(\cH),\cF(\cH))$, where $U(\cH)$ is the set of vertices and $\cF(\cH)$ is the set of hyperedges. A hypergraph $\mathcal{H}'$ is a \emph{sub-hypergraph} of $\mathcal{H}$ if $U(\mathcal{H}')\subseteq U(\mathcal{H})$ and $\mathcal{F}(\mathcal{H}')\subseteq \mathcal{F}(\mathcal{H})$. For a hyperedge $F \in \mathcal{F}(\cH)$, $U(F)$ or simply $F$ denotes the subset of vertices that form the hyperedge. Given hypergraphs $\cH_1,\, \cH_2$ defined on the same set of vertices, the hypergraph $\cH_1 \cup \cH_2$ is such that $U(\cH_1\cup \cH_2) = U(\cH_1) = U(\cH_2)$ and $\cF(\cH_1\cup \cH_2) = \cF (\cH_1) \cup \cF(\cH_2)$.
All hyperedges of a \emph{$d$-uniform hypergraph} have exactly $d$ vertices. $\hs(\cH)$ denotes a minimum $d$-{\sc Hitting Set} of the $d$-uniform hypergraph $\cH$. A \emph{packing} in a hypergraph $\cH$ is a family $\mathcal{F}'$ of hyperedges such that for any two hyperedges $F_1,F_2 \in \mathcal{F}'$, $U(F_1)\cap U(F_2) = \emptyset$.

For us ``choose a random hash function $h:V \rightarrow [N]$'', means that each vertex  in $V$ is colored with one of the $N$ colors uniformly and independently at random.

In this paper, for a problem instance $(I,k)$ of a parameterized problem $\Pi$, a \emph{high probability} event means that it occurs with probability at least $1- \frac{1}{k^c}$, where $k$ is the given parameter and $c$ is a constant. The following observation is important for the analysis of algorithms described in this paper.

\begin{obs}\label{obs:const-prob-boost}
%\comments{Sudeshna: Could you please check whether the following is ok?}
\begin{itemize}
 \item[(i)]  Let $\Pi$ be a parameterized maximization~(minimization) problem and let $(I,k)$ be an instance of $\Pi$. Let $\mathcal{A}$ be a randomized algorithm for $\Pi$, with success probability at least $p$, where $ 0 < p < 1$ is a constant. Then, if we repeat 
 $\mathcal{A}$ for $C\log{k}$ times for a suitably large constant $C$ and report the maximum~(minimum) sized output over  $C \log{k}$ outcomes, 
 then the event that $\mathcal{A}$ succeeds occurs with high probability. If the query complexity of algorithm $\mathcal{A}$ is $q$, then the query complexity of the $C\log{k}$ repetitions of $\mathcal{A}$ is $\tOh(q)$.
 \item[(ii)] Let $\Pi$ be a parameterized decision problem and let $(I,k)$ be an instance of $\Pi$. Let $\mathcal{A}$ be a randomized algorithm for $\Pi$, with success probability at least $p$, where $\frac{1}{2} < p < 1$ is a constant. Then, if we repeat 
 $\mathcal{A}$ for $C\log{k}$ times for a suitably large constant $C$ and report the \emph{majority} of the $C \log{k}$ outcomes, 
 then the event that $\mathcal{A}$ succeeds occurs with high probability. If the query complexity of algorithm $\mathcal{A}$ is $q$, then the query complexity of the $C\log{k}$ repetitions of $\mathcal{A}$ is $\tOh(q)$.
\end{itemize} 
\end{obs}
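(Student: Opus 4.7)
The statement is a standard probability amplification result, so my plan is to invoke independence of repeated trials and bound the tail by an appropriate concentration inequality for each of the two parts separately. The key observation is that in both settings the output of an individual run of $\mathcal{A}$ carries semantic structure (feasibility for (i), a single bit for (ii)) that lets us aggregate multiple runs cheaply.

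For part (i), I would first argue that for a maximization problem any output of $\mathcal{A}$ is a feasible solution and hence has value at most $\OPT$, so the maximum taken over $N = C\log k$ independent runs is also at most $\OPT$, and it equals $\OPT$ whenever at least one run succeeds (the symmetric statement holds for minimization). Consequently the aggregated procedure fails only in the event that all $N$ independent runs fail, which by independence has probability at most $(1-p)^{C\log k}$. Choosing $C$ to be any constant with $C \ge c/\log(1/(1-p))$ — well defined since $0 < p < 1$ — makes this bound at most $k^{-c}$, which is the definition of high probability used in the paper.

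For part (ii), since the output of each run of $\mathcal{A}$ is a single bit that agrees with the correct answer with probability at least $p > 1/2$, I would let $X_1,\dots,X_N$ be the indicator variables of success for $N = C\log k$ independent runs and note that the majority vote is correct whenever $\sum_i X_i > N/2$, i.e.\ whenever the empirical mean deviates from its expectation $p$ by less than $p - 1/2$. Applying the standard Hoeffding (or Chernoff) bound gives a failure probability of at most $\exp\bigl(-2N(p-\tfrac{1}{2})^2\bigr)$, which again drops below $k^{-c}$ for a suitably large constant $C = C(p,c)$, using that $p - 1/2$ is a positive constant.

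The query complexity claim in both parts is immediate: the $N = C\log k$ runs are executed on the same input but with independent random bits, so the total number of oracle queries is at most $N \cdot q = C \, q \log k = \tOh(q)$, and the aggregation (taking a max, min, or majority over $N$ numbers) uses no additional queries. There is no real obstacle here, the only mild subtleties being (a) ensuring the trials are truly independent so that the product bound in (i) and the Hoeffding bound in (ii) apply, and (b) checking that the definition of $\tOh(\cdot)$ from the preliminaries absorbs the extra $\log k$ factor, both of which are straightforward.
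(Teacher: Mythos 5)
The paper states Observation~\ref{obs:const-prob-boost} without proof; it is a standard probability--amplification result that the authors leave to the reader, so there is no ``paper proof'' for you to deviate from. Your reconstruction is correct and is exactly the argument one expects: in (i) the aggregated run fails only if all $N = C\log k$ independent trials fail, giving failure probability $(1-p)^{C\log k}\le k^{-c}$ once $C\ge c/\log\frac{1}{1-p}$; in (ii) the majority is wrong only if the number of correct trials dips below $N/2$, which Hoeffding bounds by $\exp\bigl(-2N(p-\tfrac12)^2\bigr)$ and a constant $C$ depending on $p-\tfrac12>0$ pushes below $k^{-c}$; and the query count $Nq = Cq\log k$ matches the paper's definition $\tOh(q)=\Oh(q\log k)$.

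One point worth making explicit, which you touch on only in passing: the soundness of the max/min aggregation in (i) rests on the implicit convention that every run of $\mathcal{A}$ outputs a \emph{feasible} solution to the instance (so its value is bounded on the correct side of $\OPT$, and equals $\OPT$ on a success). This holds for the way the paper invokes (i), namely for \pmaxpacking{} and \match{}, where a ``failed'' run still returns a valid (just undersized) packing or matching of the original (hyper)graph because the sampled graph is a subgraph. You state this assumption cleanly for maximization; for minimization the symmetric claim requires the same kind of feasibility guarantee (a failed run must still return a legal, possibly oversized, solution), which is not automatic for, e.g., a hitting set computed on a subsampled hypergraph. The paper sidesteps this by using (i) only in the maximization setting and achieving high probability for the minimization problems directly via unions of samples, so nothing in your proof contradicts the paper's use, but it is the one genuine hypothesis hidden in the statement and you are right to flag it.
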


\paragraph*{Representative set:}   

Let $\cH$ be a hypergraph. $\cF' \subseteq \cF(\cH)$ is said to be a \emph{$k$-representative set} corresponding to $\cH$ if the following is satisfied for any $X \subset U(\cH)$ of size $k$. If there is an $F \in \cF(\cH)$ satisfying $X \cap F = \emptyset$, then there exists $F' \in \cF'$ such that $X \cap F' = \emptyset$.

The following proposition gives a bound on the size of a $k$-representative set corresponding to a $d$-uniform hypergraph.
\begin{pre}[\cite{BOLLOBAS65}]\label{pre:rep-set-system}
If $\cH$ is $d$-uniform hypergraph, then there exists a ${k+d} \choose {d}$ size $k$-representative set corresponding to $\cH$.
\end{pre}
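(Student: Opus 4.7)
The plan is to reduce the statement to the classical Bollob\'as set-pair inequality (attributed to the same 1965 paper cited). First I would reformulate the goal in terms of inclusion-minimality: start with $\cF' = \cF(\cH)$ and greedily remove hyperedges while preserving the $k$-representative property, arriving at an inclusion-minimal $k$-representative subfamily $\cF' \subseteq \cF(\cH)$. It suffices to prove the bound $|\cF'| \leq \binom{k+d}{d}$ for any such minimal $\cF'$.

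Next I would extract a private witness for every surviving hyperedge. For each $F \in \cF'$, the family $\cF' \setminus \{F\}$ fails to be $k$-representative, so there is a set $X_F \subseteq U(\cH)$ with $|X_F| = k$ such that some hyperedge of $\cF(\cH)$ avoids $X_F$ but no hyperedge of $\cF' \setminus \{F\}$ does. Since $\cF'$ itself is $k$-representative, the hyperedge that avoids $X_F$ must be $F$ itself. Hence $F \cap X_F = \emptyset$, while $F' \cap X_F \neq \emptyset$ for every $F' \in \cF' \setminus \{F\}$.

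Finally I would apply Bollob\'as's set-pair inequality to the pairs $\{(F, X_F) : F \in \cF'\}$. These pairs satisfy $|F| = d$, $|X_F| = k$, $F \cap X_F = \emptyset$, and $F \cap X_{F'} \neq \emptyset$ whenever $F \neq F'$ (by the witness property of $X_{F'}$). Bollob\'as's theorem therefore yields $|\cF'| \leq \binom{k+d}{d}$, completing the proof.

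The main obstacle is invoking Bollob\'as's inequality itself, whose standard proof uses a random-ordering / shadow argument. Since the proposition is directly credited to \cite{BOLLOBAS65}, I would cite the set-pair inequality as a black box rather than reprove it; the only genuine content of this proposition is the translation from ``$k$-representative family'' to the set-pair configuration via minimality, which is the step sketched above.
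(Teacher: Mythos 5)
Your proposal is correct, and it is the standard derivation that the citation to Bollob\'as (1965) implicitly points to: the paper itself gives no proof, only the citation, so there is no in-paper argument to compare against. The key steps you identify are exactly right. Passing to an inclusion-minimal $k$-representative subfamily $\cF'$ is the right move because minimality is what manufactures, for each surviving $F$, a private ``blocker'' $X_F$ of size $k$ with $F \cap X_F = \emptyset$ and $F' \cap X_F \neq \emptyset$ for every other $F' \in \cF'$; the observation that the hyperedge avoiding $X_F$ must be $F$ itself (because $\cF'$ is still representative while $\cF' \setminus \{F\}$ is not) is the small but essential link. Since your cross-intersection condition holds in both directions for every unordered pair $\{F, F'\}$, you only need the symmetric (original, 1965) form of Bollob\'as's set-pair inequality, not the skew/algebraic version; with $|F| = d$ and $|X_F| = k$ the uniform bound $|\cF'| \le \binom{k+d}{d}$ follows immediately. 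The proof is complete and matches the standard textbook route.
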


\begin{coro}[\cite{saketbook15}]\label{coro:rep-set-system}
For a set system $\mathcal{H}$ as above, consider the family $\mathcal{Z} = \{U(F)~\vert~F \in \mathcal{F}(\mathcal{H})\}$ and let $\hat{\mathcal{Z}}$ be a $k$-representative set of $\mathcal{Z}$ as obtained in Proposition~\ref{pre:rep-set-system}. Let ${\mathcal{H}}'$ be the set system where $U({\mathcal{H}}')=\bigcup_{Z \in \hat{\mathcal{Z}}} Z$ and $\mathcal{F}({\mathcal{H}}') = \{F\in\mathcal{F}~\vert~ U(F) \in \hat{\mathcal{Z}}\}$. $(\mathcal{H},k)$ is a YES instance of \dphs{} if and only if $({\mathcal{H}}',k)$ is a YES instance of \dphs.
\end{coro}

\subsection{Technical preliminary}\label{sec:gpis-gpise}

For ease of exposition, we now define a related query oracle \gpise{} that returns a witness hyperedge for a YES answer of \gpis{} and returns NULL, otherwise.  The formal definition of \gpise{} is as follows.
\begin{description}
\item[\emph{Generalized d-partite independent set edge oracle} (\gpise):] For a $d$-uniform hypergraph $\mathcal{H}$, given $d$  pairwise non-empty disjoint subsets $A_1,A_2,\ldots,A_d \subseteq U(\mathcal{H})$ as input, a \gpise{} query oracle outputs a hyperedge $(u_1,\ldots,u_d) \in \mathcal{F}(\mathcal{H})$ such that $u_i \in A_i$, for each $i \in [d]$; otherwise, the \gpise{} oracle reports NULL.
\end{description}

The following observation says that \gpise{} is equivalent to \gpis{} upto $\Oh_d(\log n)$ factor, that is, a \gpise{} query can be simulated by using $\Oh_d(\log n)$ \gpis{} queries deterministically. The intuition is as follows: Let $A_1,\ldots, A_d$ be the input to  \gpise{} query. The idea is to make a \gpis{} query with input  $A_1,\ldots, A_d$. If the answer is {\sc No}, then the answer is same to the \gpise{} query. If the answer is {\sc Yes}, then we roughly halve each $A_i$ say $A_{i1}$ and $A_{i2}$, make $2^d$ \gpis{} query, and recurse suitably.
\begin{obs}\label{obs:gpis-gpise}
Let $A_1, \ldots, A_d$ be $d$ pairwise disjoint subsets of $U(\cH)$.
A \gpise{} query with input $A_1,\ldots,A_d$ can be simulated by using $\Oh_d(\log n)$ \gpis{} queries.
\end{obs}
\begin{proof}
We describe the simulation process in a recursive fashion. 
We first make a \gpis{} query with input $A_1,\ldots,A_d$. If \gpis{} reports there is no hyperedge spanning the sets $A_1,\ldots,A_d$, then we report NULL as the answer to the \gpise{} query. Otherwise, for each $i\in [d]$, we partition each $A_i$ into two parts, that is, $A_{i1}$ and $A_{i2}$ such that $\size{A_{i1}}=\lceil \frac{\size{A_{i}}}{2} \rceil$ and  $\size{A_{i2}}=\lfloor \frac{\size{A_{i}}}{2}\rfloor$. For each $A_{1j_{1}}, \ldots,A_{dj_{d}}$ with $j_{i} \in \{1,2\}$ and $i \in [d]$, we make a \gpis{} query with input $A_{1j_{1}}, \ldots,A_{dj_{d}}$. Note that we make $2^d$ \gpis{} queries. Observe that there exists at least one combination of $A_{1j_{1}}, \ldots,A_{dj_{d}}$ such that \gpis{} report that $m(A_{1j_{1}}, \ldots,A_{dj_{d}})\neq 0.$~\footnote{$m(A_{1j_{1}}, \ldots,A_{dj_{d}})$ is the number of hyperedges having a vertex in $A_{ij_{i}}$'s for each $i\in [d]$} Now we call for \gpise{} query with one such $A_{1j_{1}}, \ldots,A_{dj}$ (such that $m(A_{1j_{1}}, \ldots,A_{dj_{d}})\neq 0$) as input, and reports the answer of the \gpise{} query with input $A_{1j_{1}}, \ldots,A_{dj_{d}}$ as the answer to the \gpise{} query with input $A_1,\ldots,A_d$. The correctness of the answer to the \gpise{} query follows 
from the description of the simulation process. Let $Q_E(A_1,\ldots,A_d)$ denotes the number of \gpis{} query, that our simulation process makes, to answer \gpise{} query with input $A_1,\ldots,A_d$. Hence,

\begin{eqnarray*}
Q_E(A_1,\ldots,A_d) &\leq& 1+2^d+\max\limits_{A_{1j_{1}}, \ldots,A_{dj_{d}}}Q_E\left(A_{1j_{1}}, \ldots,A_{dj_{d}}\right).
\end{eqnarray*}
Observe that $Q_E(A_1,\ldots,A_d)=\Oh_d(\log n)$.
\end{proof}

\begin{obs}\label{obs:find-an-edge}
Let $\cG$ be subgraph, of a $d$ uniform hypergraph $\cH$, induced by $V \subseteq U(\cH)$. There exists an algorithm $\cA$ that makes $\Oh_d \left(\log \frac{1}{\delta}\right)$ \gpise{} queries and outputs either a hyperedge or {\sc Null} with the following guarantee: if there exists at least one hyperedge in $\cG$, then $\cA$ returns a hyperedge in $\cG$ with probability $1-\delta$; otherwise, $\cA$ reports {\sc Null}.
\end{obs}
\begin{rem}
By Observation~\ref{obs:gpis-gpise}, the above algorithm $\cA$ implies an algorithm that uses $\Oh_d\left(\log n \log \frac{1}{\delta}\right)$ \gpis{} queries and gives an output that is same as that of $\cA$.
\end{rem}
\begin{proof}[Proof of Observation~\ref{obs:find-an-edge}]
We use color coding technique here. Let us consider partitioning the vertex set $V$ into $d$ parts $B_1,\ldots,B_d$ such that each vertex in $V$ is present in one of the $B_i$s uniformly at random, and make a \gpise{} query with input $B_1,\ldots,B_d$. The algorithm $\cA$ repeats the above procedure $R=\Oh\left({\frac{d!}{d^d}}\log \frac{1}{\delta}\right)$ times, and  reports a hyperedge if at least one of the $R$ \gpise{} queries reports a hyperedge. Otherwise, $\cA$ reports that there is no hyperedge in $\cG$.
 
 The query complexity of $\cA$ follows from the description. Let us prove the correctness now. If there is no hyperedge in $\cG$, then all of the $R$ \gpise{} queries reports NULL. Now consider the case when there exists at least one hyperedge in $\cG$ and a particular \gpise{} query made by the algorithm $\cA$ with input $B_1,\ldots,B_d$.  The probability that all of the $d$ vertices of the particular hyperedge are in different $B_i$s is $\frac{d!}{d^{d}}$ and the \gpise{} query reports such an edge with probability at least $\frac{d!}{d^{d}}$. As we are making $R$ \gpise{} queries, we report a hyperedge with probability at least $1-\left(1-\frac{d!}{d^{d}}\right)^R\geq 1-\delta$.

%Then let us make a \gpise{} query with input  $B_1,\ldots, B_d$.  .
\color{black}
%, that is, partitions the vertices in $V$ (into $d$ parts) $R$ times.  The correctness of the algorithm $\cA$ follows from its description. Moreover, note that $\cA$ makes $R=\Oh_d\left(\log \frac{1}{\delta}\right)$ \gpise{} queries. %\complain{Gopi: do we need to say something about the success probability?}
\end{proof}

\color{black}

\section{Algorithm for \npphs}
\label{sec:hs}

\noindent
We will now prove the following result. 
\begin{theo}[Restatement of Theorem~\ref{theo:hs_np_bise}~(i) in terms of \gpise{} queries]
\label{theo:hs_np_bise-restate}
\npphs{} can be solved with $\tOh_d \left( k^{d} \right)$ \gpise{} queries.
\end{theo}
\noindent
Observe that the above theorem assumes access to \gpise{} (instead of \gpis{}) query, and Theorem~\ref{theo:hs_np_bise}~(i) now directly follows from Observation~\ref{obs:gpis-gpise}.

The algorithm for \npphs{} will use an algorithm  admitting a query complexity of $\tOh(k^d)$ for a promise version of this problem where the input instance is promise to have a hitting set of size at most $dk$. 

The main idea to solve the promise version is to sample a \emph{suitable} sub-hypergraph having $\widetilde{O}_d\left(k^d\right)$ hyperedges, using \gpise{} queries on the input hypergraph, such that the hitting set of the sampled hypergraph is a hitting set of the original hypergraph and vice versa. Two main ingredients in the proof of Theorem~\ref{theo:hs_np_bise-restate} are the following:

\begin{itemize}
    \item[1.]
        Structure of a \emph{sunflower} in a hypergraph~\cite{sunflower}: We use the following structural result of a  a $d$-uniform hypergraph $\cH$. If the number of hyperedges in $\cH$ is at least $d!k^d$, then there exists a $(k+1)$-sunflower in $\mathcal{H}$ (see Definition~\ref{defi:sunflower} and Proposition~\ref{pre:sunflower}). This helps us to design an algorithm for the promise version.
    
    \item[2.]
        An algorithm for {\sc Gap-$d$-Hitting-Set} problem using
        $\tOh \left( k\right)$ GPISE queries: here the algorithm distinguishes between the case when the hitting set is at most $k$ and at least $dk$.  
\end{itemize}
\color{black}
The $d$-{\sc Hitting Set} problem can be solved by using the algorithm for the promise version of the $d$-{\sc Hitting Set} problem along with the algorithm for {\sc Gap-$d$-Hitting-Set} problem.

\subsection{{\sc Gap-$d$-Hitting-Set} problem}\label{sec:gap}
\noindent
  In {\sc Gap-$d$-Hitting-Set} on a $d$-uniform hypergraph $\cH$, the objective is to report {\sc Accept} if $\cH$ has a hitting set of a size at most $k$, to report {\sc Reject} if the size of any minimum hitting set of $\cH$ is more than $dk$, and to report {\sc Accept} or {\sc Reject} arbitrarily if the hitting set lies between $k$ and $dk$. We will show (in Observation~\ref{obs:gappacking}) that {\sc Gap-$d$-Hitting-Set} can be solved by using $\tOh(k)$ \gpise{} queries. For the $d$-{\sc Hitting Set} problem, we first solve {\sc Gap-$d$-Hitting-Set}. If the algorithm for {\sc Gap-$d$-Hitting-Set} reports {\sc Reject}, then we conclude that the size of the minimum hitting set of $\cH$ is at least $k$. If algorithm for {\sc Gap-$d$-Hitting-Set} reports {\sc Accept}, then $\cH$ has a hitting set of size at most $dk$. Now we can use our algorithm for the promise version of $d$-{\sc Hitting Set} to give the final answer to the non-promise $d$-{\sc Hitting Set}.
  
  \begin{obs}\label{obs:gappacking}
  {\sc Gap-$d$-Hitting-Set} can be solved by using $\tOh(k)$ \gpise{} queries.
  \end{obs}
  \begin{proof}
  We find a \emph{packing}  of size at most $k+1$ in a greedy fashion, by using $\tOh(k)$ \gpise{} queries as follows. 
  \begin{itemize}
  \item[(i)] Set $V=U(\cH)$, $\cG=\cH$.
  \item[(ii)] Run algorithm $\cA$ (the algorithm corresponding to Observation~\ref{obs:find-an-edge}) on $\cG$ with parameter $\delta=\frac{1}{k^c}$, where $c$ is a suitably large constant larger  than $1$.
  \item[(iii)] If $\cA$ reports that there is no edge in $\cG$, then report {\sc Accept} and {\sc Quit}. 
  \item[(iv)] Let $F$ be the hyperedge in $\cG$ that is returned by $\cA$. If we have seen $k+1$ hyperedges (including $F$), then we report {\sc Reject} and {\sc Quit}. 
  \item[(v)] Otherwise, we delete all the vertices in $F$ from $\cG$, that is, we set $\cG=\cG \setminus F$. Go to Step (ii). 
  \end{itemize}
  The above algorithm calls algorithm $\cA$ with parameter $\delta=\frac{1}{k^c}$ at most $k+1$ times. From Observation~\ref{obs:find-an-edge}, each call of algorithm $\cA$ requires $\Oh_d(\log k)$ \gpise{} queries and succeeds with probability at least $1-\delta=1-\frac{1}{k^c}$. So, the above algorithm for {\sc Gap-$d$-Hitting-Set} makes $\tOh(k)$ \gpise{} queries and succeeds with probability at least $1-\frac{1}{k^{c-1}}$. 
  Now we discuss the correctness proof of our algorithm for  {\sc Gap-$d$-Hitting-Set} assuming all calls to algorithm $\cA$ suceed. Observe that our algorithm (for {\sc Gap-$d$-Hitting-Set}) finds a packing  of size at most $k+1$. Moreover, if the algorithm stops after finding a packing of size at most $k$, then those set of at most $k$ hyperedges correspond to a maximal packing. If hypergraph $\cH$ has a hitting set of size at most $k$, then the size of any (maximal) packing is at most $k$. In this case, our algorithm quits after finding at most $k$ hyperedges that correspond to a maximal packing, and we report {\sc Accept}. Now, if the size of the minimum hitting set  of $\cH$ is more than $dk$, then the size of any maximal packing is at least $k+1$. In this case, our algorithm will be able to find a packing of size at least $k+1$, and we report {\sc Reject}. 
  \end{proof}

%\color{black}

\subsection{Algorithm for {\npphs} via \phs}\label{sec:hsprom}
\noindent
In this Section, we begin by studying the following promise problem.

\defproblem{\phs}{Parameter $k \in \N$, the set of vertices $U(\mathcal{H})$ of a $d$-uniform hypergraph $\mathcal{H}$ such that $ \size{\hs(\cH)} \leq k$, and the access to a \gpise{} oracle.}{
%A hitting set of $\mathcal{H}$ that is of size at most $k$.
Among all the subsets of $U(\cH)$ of size at most $k$, output a set $S$ of smallest size such that any hyperedge of $\cH$ intersects with $S$.
}

We will show at the end of this section that the algorithm for \npphs{} follows from the algorithms for {\sc Gap-$d$-Hitting-Set} and {\phs} problems. Note that we have discussed about {\sc Gap-$d$-Hitting-Set} in Section~\ref{sec:gap}. The following theorem formally states the result on \phs{}.

%For \phs, we design an algorithm with query complexity $\tOh(k^d)$. 

\begin{algorithm}[!hbt]
%\scriptsize
\caption{Algorithm for \phs{}}
\label{algo:phs}
\KwIn{Parameter $k \in \N$, the set of vertices $U(\mathcal{H})$ of a $d$-uniform hypergraph $\mathcal{H}$ such that $ \size{\hs(\cH)} \leq k$, and the access to a \gpise{} oracle.}
\KwOut{A minimum hitting set of $\mathcal{H}$ that is of size at most $k$.}
%\LinesNumbered
\Begin
	{
Take $\alpha \log k$ random hash functions of the form $h:U(\cH)\rightarrow [\beta k]$, where $\alpha =100d^2$ and $\beta=100d^32^{d+5}$.\\
\For{ (each hash function $h$)}
{
Find $U_i=\{u \in U(\mathcal{H}) : h(u) =i\}$, where $i \in [\beta k]$.\\
 Make a \gpise{} query with input $(U_{i_1},\ldots,U_{i_d})$ for each $1 \leq i_1< \ldots < i_d \leq \beta k$ such that $U_{i_j} \neq \emptyset$ $\forall j \in [d]$.\\
Let $\mathcal{F}'$ be the set of hyperedges that are output by the $\Oh(k^d)$ \gpise{} queries. \\
 Generate a subhypergraph 
${\mathcal{H}}^h$ of $\mathcal{H}$ such that 
$U(\mathcal{H}^h)=U(\mathcal{H})$ and $\mathcal{F}(\mathcal{H}^h)=\cF'$.
}
Let $\mathcal{H}_1,\ldots,\mathcal{H}_{\alpha  \log k}$ be the subhypergraphs generated by $\alpha \log k$ hash functions.\\
 Find $\hat{\mathcal{H}}=\mathcal{H}_1 \cup \cdots \cup \mathcal{H}_{\alpha \log k}$.\\
Report $\hs(\hat{\cH})$ as the output.

}
\end{algorithm}
\begin{theo}
\label{theo:hit_set}
\phs{} can be solved with $\tOh(k^{d})$ \gpise{} queries with high probability.
\end{theo}

Here, we give an outline of the algorithm for \phs. The algorithm is inspired by the streaming algorithm of ~\cite{ChitnisCEHMMV16}. The algorithm chooses $\alpha \log k$ hash functions of the form $h:U(\cH)\rightarrow [\beta k]$, where $\alpha$ and $\beta$ are suitable constants depending on $d$. Note that each hash function partitions the vertex set into $\beta k$ parts. We make \gpise{} queries between each $d$-combination of partition, and generate the subhypergraph with the hyperedges output by the \gpise{} queries. Let $\widehat{\cH}$ be the union of subhypergraphs generated due to the hash functions. The algorithm finally finds a minimum hitting set of $\widehat{\cH}$. The formal description of the algorithm is presented in Algorithm~\ref{algo:phs}. The following lemma establishes that any minimum hitting set of $\widehat{\cH}$ is also a minimum hitting set of $\cH$.

\begin{lem}[Proof in Section~\ref{ssec-proof-lem:hs_main}]
\label{lem:hs_main}
If $\size{\hs (\mathcal{H})} \leq k$, then $\hs (\mathcal{H})=\hs (\hat{\mathcal{H}})$ with high probability.
\end{lem}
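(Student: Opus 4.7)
Since $\hat{\mathcal{H}}\subseteq\mathcal{H}$, every hitting set of $\mathcal{H}$ also hits $\hat{\mathcal{H}}$, so $\size{\hs(\hat{\mathcal{H}})}\le\size{\hs(\mathcal{H})}\le k$. The substance of the plan is the converse: with high probability over the hash functions used in the $\alpha\log k$ samples, every hitting set $\hat{S}$ of $\hat{\mathcal{H}}$ with $\size{\hat{S}}\le k$ is also a hitting set of $\mathcal{H}$. I partition the edges of $\mathcal{H}$ according to the sunflower-core structure: let $\mathcal{F}^{*}\subseteq\mathcal{F}(\mathcal{H})$ be the hyperedges that contain no large core as a subset. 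The Erd\H{o}s--Rado sunflower lemma immediately yields $\size{\mathcal{F}^{*}}\le d!(10dk)^{d}=\Oh(k^{d})$: a larger $\mathcal{F}^{*}$ would contain a sunflower of size $>10dk$ whose core would be a large core contained in each of its petal edges, contradicting the definition of $\mathcal{F}^{*}$.

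I will establish that $\hat{\mathcal{H}}$ satisfies, with high probability, the following two properties:
\begin{itemize}
\item[(P1)] every $e\in\mathcal{F}^{*}$ is present as a hyperedge of $\hat{\mathcal{H}}$;
\item[(P2)] every large core $C$ of $\mathcal{H}$ remains significant in $\hat{\mathcal{H}}$, i.e., $S_{\hat{\mathcal{H}}}(C)>k$.
\end{itemize}
These together imply the claim: take any hitting set $\hat{S}$ of $\hat{\mathcal{H}}$ with $\size{\hat{S}}\le k$ and any $e\in\mathcal{F}(\mathcal{H})$. If $e\in\mathcal{F}^{*}$, (P1) places $e$ in $\hat{\mathcal{H}}$ and $\hat{S}$ hits $e$; otherwise $e$ contains a large core $C$, and (P2) provides more than $k$ pairwise disjoint petals of $C$ in $\hat{\mathcal{H}}$, which $\hat{S}$ cannot all cover with $k$ vertices, forcing $\hat{S}$ to intersect $C\subseteq e$. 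Hence $\size{\hs(\mathcal{H})}\le\size{\hat{S}}=\size{\hs(\hat{\mathcal{H}})}$, and equality follows.

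For (P1), I fix $e=\{u_{1},\ldots,u_{d}\}\in\mathcal{F}^{*}$ and analyze one sample with random hash $h:U(\mathcal{H})\to[\beta k]$. The edge $e$ enters $\hat{\mathcal{H}}_{i}$ provided (a)~the $u_{i}$'s receive distinct colors (failure probability $\Oh(d^{2}/(\beta k))$), and (b)~no other $e'\in\mathcal{F}(\mathcal{H})$ has a vertex in each color class $U_{h(u_{1})},\dots,U_{h(u_{d})}$, in which case GPISE must return $e$ itself. I partition potential competitors by $s=\size{e\cap e'}$: since $e\in\mathcal{F}^{*}$ forbids any subset of $e$ from being a large core, iterated applications of the sunflower lemma inside the link hypergraphs of subsets of $e$ bound, for each $s$-subset $C\subseteq e$, the number of competitors with $e\cap e'=C$, and each such competitor matches with probability at most $(d/(\beta k))^{d-s}$. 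With $\beta$ of order $d^{3}2^{d}$ the total failure probability per sample is a constant below $1/2$; boosting over $\alpha\log k$ independent samples reduces the per-edge failure probability to $k^{-\Omega(1)}$, and a union bound over the $\Oh(k^{d})$ edges of $\mathcal{F}^{*}$ completes (P1).

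For (P2), observe that if $C$ is large and $C'\subsetneq C$ is also large, any edge containing $C$ also contains $C'$, so it suffices to preserve the significance of the \emph{minimal} large cores (those with no large proper subset). A structural induction on $d$, using the sunflower lemma together with $\size{\hs(\mathcal{H})}\le k$ (which forces every minimal large core to intersect $\hs(\mathcal{H})$), shows that the number of minimal large cores is polynomial in $k$. For each such core $C$, a witness sunflower of size $>10dk$ with pairwise disjoint petals exists in $\mathcal{H}$; by an analysis analogous to (P1), each petal edge $C\cup P$ enters $\hat{\mathcal{H}}_{i}$ with constant probability, and because the petals are vertex-disjoint the corresponding indicators are essentially independent, so a concentration argument plus boosting over $\alpha\log k$ samples yields more than $k$ surviving petals with high probability; a union bound over the minimal large cores finishes (P2). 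The hardest step is the competitor accounting in (P1): edges of $\mathcal{H}$ sharing few vertices with $e$ can in principle be arbitrarily numerous, and controlling their aggregate contribution to the failure probability requires a delicate iterative use of the sunflower lemma inside the link hypergraphs of subsets of $e$, which is possible precisely because $e$ contains no large core.
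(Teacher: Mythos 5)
Your overall decomposition (edges with no large core vs.\ edges containing a large core) mirrors the paper's, and (P1) together with the bound $\size{\cF^*} = \Oh(k^d)$ corresponds to the paper's Lemma~\ref{lem:hs_bound} and claim (a) of Lemma~\ref{lem:hs_inter}. The argument you sketch for (P1) — partitioning the competitors of $e$ by intersection size and applying the sunflower lemma inside each link — is more elaborate than the paper's, which constructs a single set $D \subseteq U(\cH)\setminus F$ of size at most $2^{d+5}d^2 k$ that every competing edge must pierce, reducing preservation of $F$ to a simple hash-collision event between $F$ and $D$. Your version could plausibly be made to work, but it is sketchier and gains nothing.

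The genuine gap is in (P2). You want every large core of $\cH$ to remain significant in $\hat{\cH}$, reduce this to preserving only the \emph{minimal} large cores (no large proper subset), and assert that ``a structural induction on $d$'' shows there are polynomially many of them. That bound is neither proved nor clearly true. The family that the paper bounds in Lemma~\ref{lem:hs_bound1} is $\cC'$ — large cores none of whose proper subsets are \emph{significant} — which is a strictly more restrictive condition than yours, because significant ($>k$ petals) is weaker than large ($>10dk$ petals). The sunflower argument in that lemma exploits exactly this gap: given a $(k+1)$-sunflower among cores in $\cC'$ with nonempty common core $C^*$, the nested Observation produces a $(k+1)$-sunflower of hyperedges with common core $C^*$, certifying that $C^*$ is significant — which already contradicts membership in $\cC'$. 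To run the analogous argument for minimal large cores you would need to certify that $C^*$ is \emph{large}, i.e.\ produce a $(10dk{+}1)$-sunflower of edges with common core $C^*$, which in turn would require starting from a $(10dk{+}1)$-sunflower among the cores; but then the bookkeeping in that Observation (each core $C_j$ in the sunflower can knock out up to $d-1$ edges of each $A_i$) subtracts $\Theta(d\cdot t)$ from $\size{A_i}>10dk$, and for $t>10dk$ the $B_i$'s become empty. So the polynomial bound on minimal large cores does not follow from the sunflower lemma in the same way, and your proposal gives no other argument for it.

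The paper does not need such a bound: it only proves that cores in $\cC'$ stay significant (claim (b) of Lemma~\ref{lem:hs_inter}), and then handles the remaining large cores by a reduction rule inside the proof of Lemma~\ref{lem:hs_main}. Whenever a large core $C_1$ contains a significant core $C_2$, it deletes all but one hyperedge of a witness sunflower for $C_1$, noting that the minimum hitting set is unchanged and that $C_2$ (and hence $\cF_s$) survives; iterating this yields $\tilde{\cH}$ with the same hitting-set number in which every large core already behaves like a member of $\cC'$, to which Lemma~\ref{lem:hs_inter} applies. Your proposal contains no analogue of this reduction step, and without either the reduction or a correct bound on the large cores you must union-bound over, (P2) does not go through.
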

\begin{rem}
{The statement of our Lemma~\ref{lem:hs_main} is same as that of~\cite[Theorem~3.2]{ChitnisCEHMMV16}, but the proof is not. We feel the proof of~\cite[Theorem~3.2]{ChitnisCEHMMV16} is incomplete. The authors argue that $\hs(G)=\hs(U \cup F)$ where $G$ denotes the hypergraph, $U$ is the set of large cores and $F$ is the set of hyperedges that do not include any significant core. Next, the authors argue that $\hs(U \cup F)=\hs(U' \cup F)$ where $U'$ is the set of large cores that do not contain  significant cores. We feel that their statement is correct but the part of the proof meant for this, is sketchy. This is mainly because $\hs(U' \cup F)$ may not hit some hyperedges in $G$ that contain a large core $C$ such that $C$ contains a significant core but not large core $C'$. We prove Lemma~\ref{lem:hs_main} in  Section~\ref{ssec-proof-lem:hs_main}.}
\end{rem}

Observe that we are done with the proof of Theorem~\ref{theo:hit_set}  from Lemma~\ref{lem:hs_main}, except the query complexity of \phs{}. The query complexity of $\tOh_{d}(k^d)$ follows from the description of Algorithm~\ref{algo:phs}.

We finally come to the proof of Theorem~\ref{theo:hs_np_bise} (restated as Theorem~\ref{theo:hs_np_bise-restate}).

\begin{proof}[Proof of Theorem~\ref{theo:hs_np_bise-restate}]
We first run the algorithm of {\sc Gap-$d$-Hitting-Set} that succeeds with high probability (see Observation~\ref{obs:gappacking}). Under the assumption that the algorithm of {\sc Gap-$d$-Hitting-Set} succeeds, it reports {\sc Accept} if $\cH$ has a hitting set of size at most $k$, reports {\sc Reject} if the size of any minimum hitting set of $\cH$ is more than $dk$, and it reports {\sc Accept} or {\sc Reject} arbitrarily if the hitting set is more than $k$ and at most $dk$.

If the algorithm of {\sc Gap-$d$-Hitting-Set} reports {\sc Reject}, we conclude that $\size{\hs(\cH)}\geq k+1$.  So, in this case we report that there does not exist any hitting set of size at most $k$.
 Otherwise, if the algorithm of {\sc Gap-$d$-Hitting-Set} reports {\sc Accept}, then $\size{\hs(\cH)} \leq dk$. As $\size{\hs(\cH)} \leq dk$, $\hs(\cH)$ can be found using our algorithm for \phs{} by making $\tOh((dk)^d)$ \gpise{} queries. If $\size{\hs(\cH)} \leq k$, we output $\hs(\cH)$ and if $\size{\hs(\cH)} > k$, we report that there does not exist a hitting set of size at most $k$. The total number of \gpise{} queries made by our algorithm for \npphs{} is $\tOh((dk)^{d})$.
\end{proof}

Only thing that is left to show is the proof of Lemma~\ref{lem:hs_main}.

\subsection{Proof of Lemma~\ref{lem:hs_main}}
\label{ssec-proof-lem:hs_main}
\noindent
To prove Lemma~\ref{lem:hs_main}, we need some intermediate definitions and results. As mentioned earlier, we use the structure of the sunflower in a hypergraph~\cite{sunflower}. The core of a sunflower is the pairwise intersection of the hyperedges present in the sunflower, which is formally defined as follows. 
\begin{defi}
\label{defi:sunflower}
Let $\mathcal{H}$ be a $d$-uniform hypergraph; $\cS=\{F_1,\ldots,F_r \}\subseteq \cF(\mathcal{H})$ is an \emph{r-sunflower} in $\mathcal{H}$ if there exists $C \subseteq U(\mathcal{H})$ such that $F_i \cap F_j = C$ for all $1 \leq i < j \leq r$. $C$ is defined to be the \emph{core} of the sunflower $\mathcal{S}$ and 
$\cP=\{F_i \setminus C : i \in [r]\}$ is defined as the set of \emph{petals} of the sunflower $\cS$ in $\mathcal{H}$.
\end{defi}

Based on the number of hyperedges forming the sunflower, the core of a sunflower can be \emph{large}, \emph{significant}, or \emph{small}. We will now fix the definitions in such a way that each large core is significant and each significant core (and hence, large core also) must intersect with any hitting set.

\begin{defi}[Sunflowers large and significant]
Let $S_{\mathcal{H}}(C)$ denote the maximum integer $r$ such that $C$ is the core of an $r$-sunflower in $\mathcal{H}$. If $S_{\mathcal{H}}(C) > 10dk$, $C$ is \emph{large}. If $S_{\mathcal{H}}(C)>k$, $C$ is 
\emph{significant}. $C$ is \emph{small} if it is not significant.
\end{defi}

The promise that the hitting set is bounded by $k$, will help us 
\begin{itemize}
\item[(i)] to bound the number of hyperedges that do not contain any large core as a subset, 

\item[(ii)] to guarantee that all the large cores in the original hypergraph, that do not contain any significant cores as a subset, are significant in the sampled hypergraph with high probability. This will ensure that the large cores in the original hypergraph will intersect any hitting set of the sampled hypergraph, and

\item[(iii)] to guarantee that all the hyperedges that do not contain any large core as a subset, are present in the sampled hypergraph with high probability. 
\end{itemize}
Using the above observations, we can prove that the hitting set of the sampled hypergraph is the hitting set of the original graph with high probability. To formalize the above discussion, we state the following proposition and then define some sets, which will be needed for our analysis.

\begin{pre}[\cite{sunflower}]
\label{pre:sunflower}
Let $\mathcal{H}$ be a $d$-uniform hypergraph. If $\size{\cF(\mathcal{H})} > d!k^d$, then there exists a $(k+1)$-sunflower in $\mathcal{H}$.
\end{pre}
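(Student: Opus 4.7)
The plan is to prove this by induction on $d$, which is the classical Erdős--Rado sunflower argument. The base case $d=1$ is immediate: a $1$-uniform hypergraph is just a collection of singletons, and more than $k$ singletons automatically give a $(k+1)$-sunflower with empty core.

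For the inductive step, assume the result holds for $(d-1)$-uniform hypergraphs and let $\mathcal{H}$ be a $d$-uniform hypergraph with $\size{\cF(\mathcal{H})} > d!\,k^d$. I would first look at a maximal set $M \subseteq \cF(\mathcal{H})$ of pairwise disjoint hyperedges. There are two cases. If $\size{M} \geq k+1$, then $M$ itself is a $(k+1)$-sunflower with empty core, and we are done. Otherwise $\size{M} \leq k$, and by maximality of $M$ every hyperedge of $\mathcal{H}$ must share at least one vertex with the vertex set $Y = \bigcup_{F \in M} F$, which satisfies $\size{Y} \leq dk$.

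Now I would apply pigeonhole: since every hyperedge meets $Y$, averaging gives some vertex $v \in Y$ that lies in at least $\size{\cF(\mathcal{H})}/\size{Y} > d!\,k^d/(dk) = (d-1)!\,k^{d-1}$ hyperedges. Collect these hyperedges, delete $v$ from each to form a $(d-1)$-uniform hypergraph $\mathcal{H}'$ with more than $(d-1)!\,k^{d-1}$ hyperedges. By the induction hypothesis, $\mathcal{H}'$ contains a $(k+1)$-sunflower with some core $C'$ and pairwise disjoint petals. Re-inserting $v$ into each of these $(d-1)$-sized hyperedges yields $k+1$ hyperedges in $\mathcal{H}$ whose pairwise intersection is exactly $C' \cup \{v\}$, hence a $(k+1)$-sunflower with core $C' \cup \{v\}$.

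The only step needing care is the arithmetic in the pigeonhole bound, namely that $d!\,k^d$ is the right threshold so that after dividing by $\size{Y} \leq dk$ we obtain exactly $(d-1)!\,k^{d-1}$, matching the inductive hypothesis. There is nothing deeper happening, but the constants must line up in this precise way for the induction to close, and that is the one place where the proof could fail if the bound were weakened.
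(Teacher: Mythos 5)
Your proof is correct: it is the classical Erd\H{o}s--Rado induction on $d$, which is precisely the argument in the reference the paper cites without reproducing it. The base case, the maximal-matching dichotomy, the pigeonhole step yielding a vertex $v$ in more than $(d-1)!\,k^{d-1}$ hyperedges, and the lifting of the $(k+1)$-sunflower from the link of $v$ all check out, including the arithmetic showing the threshold $d!\,k^d$ closes the induction.
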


\begin{defi}[Additional notations]
\label{defi:sets}
In the hypergraph $\mathcal{H}$,  
\begin{itemize}
    \item 
        $\cC$ is the set of \emph{large} cores;
    
    \item
        $\cF_s$ is the family of edges that do not contain any \emph{large} core;
    
    \item
        $\cC'$ is the family of \emph{large} cores none of which contain a \emph{significant} core as a subset.
\end{itemize}
\end{defi}

The following two results (Lemmas~\ref{lem:hs_bound} and~\ref{lem:hs_bound1}) give useful bounds with respect to the input instances of \phs.
\begin{lem}
\label{lem:hs_bound}
If $\size{\hs (\mathcal{H})} \leq k$, then $ \size{\cF_s} \leq d!   (10dk)^d$.  That is, if a hitting set of the hypergraph $\cH$ is bounded by $k$, then the number of hyperedges that do not contain any large core is  at most  $d!   (10dk)^d$.
\end{lem}
\begin{proof}
If $\size{\cF_{s}} > d!(10dk)^d$, then there exists a $(10dk +1)$-sunflower $\cS$ in $\cH$ by Proposition~\ref{pre:sunflower} such that each edge in $\cS$ belongs to $\cF_s$. First, since $\size{\hs(\cH)} \leq k$, the core $C_{\cH}(\cS)$ of $\cS$ must be non-empty. 
Note that $C_{\cH}(\cS)$ is a large core and $C_{\cH}({\cS})$ is contained in every edge in $\cS$. Observe that we arrived at a contradiction, because any edge in 
$\cS$ is also an edge in $\cF_{s}$, and by definition any edge in $\cF_{s}$ does not contain a large core by definition. Hence, $\size{\cF_{s}} \leq d!(10dk)^d$.
\end{proof}

\begin{lem}
\label{lem:hs_bound1}
If $\size{\hs (\mathcal{H})} \leq k$, then $\size{\cC'} \leq (d-1)!  k^{d-1}$. That is, if a hitting set of the hypergraph $\cH$ is bounded by $k$, then the number of large cores without containing any significant core as a subset is at most $(d-1)!  k^{d-1}$.
\end{lem}
\begin{proof}
% \marginpar{Gopi:to correct}

Let us consider the set system of all cores in $\cC'$. Note that the number of elements present in each core in $\cC'$ is at most $d-1$. To reach a contradiction assume that 
$\size{\cC'} > (d-1)! \cdot k^{d-1}$.
As $\size{\cC'} > (d-1)! \cdot k^{d-1}$, there exists a $(k+1)$-sunflower $\cS'$, by Proposition~\ref{pre:sunflower}~\footnote{Analogous of Proposition~\ref{pre:sunflower}.} holds even if the hyperedges are of size at most $d$. Let $C_1,\ldots,C_{k+1}$ be the sets present in the sunflower $\cS'$ and let $C_{\cS'}$ be the core of $\cS'$. Observe that $C_{\cS'} \neq \emptyset$, otherwise if $C_{\cS'} =\emptyset$, then $\size{\hs(\cH)} > k$.

To complete the proof of this lemma we need the following observation about $C_{\cS'}$.
 \begin{obs}\label{obs:conflict}
    $C_{\cS'}$ is the pairwise intersection of a family of $k+1$ edges in $\cH$. 
 \end{obs}

 The above observation implies that $S_{\mathcal{H}}(C_{\cS'}) > k$ or equivalently $C_{\cS'}$ is a significant core. Note that each $C_i$ contains $C_{\cS'}$ and $C_{\cS'}$ is a significant core. This contradicts the definition of $\cC'$. We have reached a contradiction, and therefore $\size{\cC'} \leq (d-1)!  k^{d-1}$.
\end{proof}

 \begin{rem}
{    The statement of \cite[Lemma~3.5]{ChitnisCEHMMV16} is same as the combination of our Lemmas~\ref{lem:hs_bound} and~\ref{lem:hs_bound1}. The proof of our Lemma~\ref{lem:hs_bound} is same as that of the corresponding part of the proof of \cite[Lemma~3.5]{ChitnisCEHMMV16}. However, the part corresponding to Lemma~\ref{lem:hs_bound1} in the proof of \cite[Lemma~3.5]{ChitnisCEHMMV16} is incomplete. We give complete proofs of Lemmas~\ref{lem:hs_bound} and~\ref{lem:hs_bound1} in this paper. In particular, inside the proof of \cite[Lemma~3.5]{ChitnisCEHMMV16}, they have made a claim (without a proof) which is equivalent to the statement of Observation~\ref{obs:conflict}.}
 \end{rem}

Now, we prove Observation~\ref{obs:conflict}.

\begin{proof}[Proof of Observation~\ref{obs:conflict}]
 Let $A_i$ be a set of at least $10dk$ edges that form a sunflower with core $C_i$, where $i \in [k+1]$.
 Observe that this is possible as each $C_i$ is a large core.  Before proceeding further, note that $C_i \cap C_j =C_{\cS'}$ and $(C_i \setminus C_{\cS'}) \cap (C_j \setminus C_{\cS'}) =\emptyset$ for all $i , j \in [k+1]$ with $i \neq j$. 
 
  Consider $B_i \subseteq A_i$ such that for each $F \in B_i$, $F \cap C_j = C_{\cS'} ~\forall j \neq i$ and 
  $\size{B_i} \geq 9dk$. First, we argue that $B_i$ exists for each $i \in [k+1]$. Recall that for each $j\in [k+1]$, $\size{C_j} \leq d-1$. Note that any vertex belongs to at most one set $F \setminus C_i$. Also, for any pair of edges $F_1,F_2 \in A_i$, $(F_1 \setminus C_i) \cap (F_2 \setminus C_i)= \emptyset$. Thus, using the fact that $C_i \cap C_j =C_{\cS'}$ for $i \neq j$, a vertex in $C_j \setminus C_{\cS'}$ can belong to at most one edge in $A_i$. This implies that there are at most $(d-1)k < dk$ sets $F$ in $A_i$ such that $F \cap C_j \neq C_{\cS'}$ for some $j\neq i \in [k+1]$. Therefore, the number of edges $F\in A_i$ such that $F \cap C_j = C_{\cS'}~\forall j\neq i \in [k+1]$ is at least $10dk - dk = 9dk$. So, $B_i$ exists as stated.

  Next, we argue that there exists $k+1$ edges
 $F_1,\ldots,F_{k+1}$ such that $F_i \in B_i ~\forall i \in [k+1]$ and $F_i \cap F_j =C_{\cS'}$ for all $i , j \in [k+1]$ with $i \neq j$. We show the existence of the $F_i$'s inductively. For the base case, take any arbitrary edge in $B_1$ as $F_1$.
 Assume that we have chosen $F_1,\ldots, F_p$, where $1\leq p \leq k$, such that the required conditions hold. We will
  show that there exists $F_{p+1} \in B_{p+1}$ such that $F_i \cap F_{p+1}=C_{\cS'}$ for each $ i \in [p]$. By construction of $B_{i}$'s, no edge in $B_{p+1}$ intersects with $C_{i} \setminus C_{\cS'}, i \leq p$; but every edge in $B_{p+1}$ contains $C_{\cS'}$. Also, none of the chosen edges out of $F_1,\ldots,F_{p}$, intersects $C_{p+1} \setminus C_{\cS'}$. So, if we can select an edge  $F \in B_{p+1}$ such that $F \setminus C_{p+1}$ is disjoint from
  $F_i \setminus C_i,~\forall i \in [p]$, then we are done. Note that for two edges $F',F'' \in B_{p+1}$, $F' \setminus C_{p+1}$ and $F'' \setminus C_{p+1}$ are disjoint. Consider the set $B'_{p+1} \subseteq B_{p+1}$ such that each edge $F \in B'_{p+1}$ intersects with at least one out of $\{F_1\setminus C_1,\ldots,F_p \setminus C_p\}$. Observe that $\size{B'_{p+1}} \leq dp \leq dk$, because $(F_i \setminus C_i) \cap (F_j \setminus C_j) = \emptyset,~\forall i\neq j \in [p]$ and $\size{F_i}\leq d, i \in [p]$. As $\size{B_{p+1}} \geq 9dk$, we select any edge in $B_{p+1} \setminus B'_{p+1}$ as $F_{p+1}$.  
 %\complain{Gopi: It can be proved by a double counting arguemnet.}
 \end{proof}

The following lemma provides insight into the structure of $\hat{\mathcal{H}}$ and thereby is the most important part of proving Lemma~\ref{lem:hs_main}.

\begin{lem}
\label{lem:hs_inter}
Let $\hat{\mathcal{H}}=\mathcal{H}_1 \cup \cdots \cup \mathcal{H}_{\alpha \log k}$. If $\size{\hs(\mathcal{H})} \leq k$, then the following two events hold with high probability.
\begin{itemize}
\item[(a)] $\cF_{s} \subseteq \cF(\hat{\mathcal{H}})$, that is, any hyperedge of the hypergraph $\cH$ that does not contain any large core is a hyperedge in the sampled hypergraph $\hat{\cH}$;
\item[(b)] $S_{\hat{\mathcal{H}}}(C) > k,~\forall C \in \cC'$, that is, every large core in the hypergraph $\cH$ that does not contain any significant core as a subset is a significant core in the sampled hypergraph $\hat{\cH}$.
\end{itemize}
\end{lem}
\begin{proof}
First, consider the two claims stated below.
\begin{cl}
\label{clm:hs-int-1}
Let $ i \in [\alpha \log k]$ and $F \in \cF_{s}$. Then $\pr(F \in \cF(\mathcal{H}_i)) \geq \frac{1}{2}$.
\end{cl}
\begin{cl}
\label{clm:hs-int-2}
Let $ i \in [\alpha \log k]$ and  $C \in \cC$. Then $\pr( S_{\mathcal{H}_i}(C) > k) \geq \frac{1}{2}$.
\end{cl}
Claim~\ref{clm:hs-int-1} says that any herperedge in $\cF_s$ is also a hyperedge in $\cH_i$ with probability at least $1/2$, and Claim~\ref{clm:hs-int-2} says that any large core in $\cC'$ is a significant core in $\cH_i$ with probability at least $1/2$.
Before we give the proofs of Claims~\ref{clm:hs-int-1} and \ref{clm:hs-int-2}, we will first see their implications.

Recall that $\hat{\mathcal{H}}=\mathcal{H}_1 \cup \cdots \cup \mathcal{H}_{\alpha \log k}$. Using Claims~\ref{clm:hs-int-1} and \ref{clm:hs-int-2}, we get the followings for $F \in \cF_s$ and $C \in \cC'$, respectively.
\begin{eqnarray*}
	\pr( F \notin \cF (\hat{\mathcal{H}})) \leq \left(1-\frac{1}{2} \right)^{\alpha \log k} 
	\leq \frac{1}{k^{\alpha}}.
\end{eqnarray*}
and
\begin{eqnarray*}
	\pr( S_{\hat{\mathcal{H}}}(C) \leq k) \leq  \left(1-\frac{1}{2} \right)^{\alpha \log k} 
	\leq \frac{1}{k^{\alpha}}.
\end{eqnarray*}

 Using the union bound together with Lemma~\ref{lem:hs_bound}, we can deduce the following
 \begin{eqnarray*}
 	\pr(\cF_{s} \nsubseteq \cF(\hat{\mathcal{H}})) \leq \sum\limits_{F \in \cF_{s}} 
 	\pr( F \notin \cF(\hat{\mathcal{H}}) ) \leq \frac{d!  (10k)^d}{k^{\alpha}} 
 	\leq \frac{1}{k^{98}}
\end{eqnarray*}
and
\begin{eqnarray*} 
	\pr(\exists ~ C \in \cC' ~\mbox{such that} ~S_{\hat{\mathcal{H}}}(C)  \leq k) 
	\leq \sum\limits_{C \in \cC'} \pr( S_{\hat{\mathcal{H}}}(C) \le k) 
	\leq \frac{(d-1)!k^{d-1}}{k^{\alpha}} \leq \frac{1}{k^{99}}.
\end{eqnarray*}
Note that we have used the fact that $d$ is a constant independent of $k$.
Hence, 
 $$
 \pr(\cF_{s} \nsubseteq \cF(\hat{\mathcal{H}}) ~\mbox{or}~ \exists ~ C \in \cC' ~\mbox{such that} ~S_{\hat{\mathcal{H}}}(C) \leq k) \leq \frac{2}{k^{98}}.
 $$

This implies that with high probability,  $\cF_{s} \subseteq \cF(\hat{\mathcal{H}})$ and $S_{\hat{\mathcal{H}}}(C) > k,~ \forall C \in \cC'$
\end{proof}

We now come back to the proofs of Claims~\ref{clm:hs-int-1} and~\ref{clm:hs-int-2}.
\begin{proof}[Proof of Claim~\ref{clm:hs-int-1}]
Without loss of generality, we will prove the statement for the graph $\mathcal{H}_1$. Let $h:U(\mathcal{H}) \rightarrow [\beta k]$ be the random hash function used in the sampling of $\mathcal{H}_1$. Observe that  by the construction of $\mathcal{H}_1$, $F \in \cF(\mathcal{H}_1)$ if the following two conditions hold.
\begin{itemize}
\item  $ h(u)= h(v)$ if and only if $u=v$ for all $u,v \in F$.
\item For any $F' \neq F$ and $F' \in \cF(\cH)$, $F'$ and $F$ differ in the color of at least one vertex.
\end{itemize}
Hence, $\pr(F \notin \cF(\mathcal{H}_1)) \leq \sum\limits _{u,v \in F:u \neq v}\pr(h(u) = h(v)) + \pr (\cE_1)$,
where $\cE_1$ is the event defined as follows
$$
\mbox{$\cE_1$: $\exists$ an edge $F' \in \cF(\cH)$ such that $ F' \neq F$ and $\{h(z):z \in F\} = \{h(z):z \in F'\}$}.
$$

Before we bound the probability of the occurrence of $\cE_1$, we show the existence of a set $D \subseteq U(\cH) \setminus F$ of bounded cardinality such that each edge in $\cF(\cH) \setminus \{F\}$ intersects with $D$.

\begin{obs}
Let $F \in \cF_s$. There exists a set $D \subseteq U(\cH) \setminus F$ such that  each edge in
$\cF(\cH) \setminus \{F\}$ intersects with $D$ and $\size{D} \leq 2^{d+5}  d^2 k$.
\end{obs}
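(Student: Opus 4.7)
The plan is to build $D$ greedily through a maximal ``petal-disjoint'' family of edges, and then exploit the hypothesis $F \in \cF_s$ (no subset of $F$ is a large core) to bound its size via the sunflower definition.

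First I would select a maximal subfamily $\cF'' \subseteq \cF(\cH) \setminus \{F\}$ subject to the property that the sets $\{F' \setminus F : F' \in \cF''\}$ are pairwise disjoint, and then define
\[
D \;:=\; \bigcup_{F' \in \cF''} (F' \setminus F) \;\subseteq\; U(\cH) \setminus F.
\]
The hitting property is immediate from maximality: for every edge $F'' \in \cF(\cH) \setminus \{F\}$, the set $F'' \setminus F$ is non-empty (since $|F''|=|F|=d$ and $F'' \neq F$), so either $F'' \in \cF''$, in which case $F'' \cap D \supseteq F'' \setminus F \neq \emptyset$, or else adding $F''$ to $\cF''$ would violate maximality, meaning $(F''\setminus F) \cap (F' \setminus F) \neq \emptyset$ for some $F' \in \cF''$, which again forces $F'' \cap D \neq \emptyset$.

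The main step (and the only place the hypothesis $F \in \cF_s$ is used) is bounding $|\cF''|$. I would partition $\cF''$ by the trace on $F$: for each $C \subseteq F$ set $\cF''_C := \{F' \in \cF'' : F' \cap F = C\}$. The key observation is that the edges in $\cF''_C$ form a sunflower in $\cH$ with core exactly $C$: all edges in $\cF''_C$ agree on $F$ in exactly $C$, and by construction their parts outside $F$ are pairwise disjoint, so any two of them intersect in exactly $C$. Hence $S_\cH(C) \geq |\cF''_C|$. The case $C = F$ is vacuous because $F' \supseteq F$ and $|F'|=d$ would force $F'=F \notin \cF''$. For every proper subset $C \subsetneq F$, the assumption $F \in \cF_s$ says $C$ is not a large core, i.e.\ $S_\cH(C) \leq 10dk$, giving $|\cF''_C| \leq 10dk$.

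Summing over the at most $2^d$ proper subsets yields $|\cF''| \leq 2^d \cdot 10dk$, and since each $F' \setminus F$ contributes at most $d$ vertices to $D$, we get
\[
|D| \;\leq\; d \cdot |\cF''| \;\leq\; 10 \cdot 2^d \cdot d^2 k \;\leq\; 2^{d+5} \cdot d^2 k,
\]
as required. The conceptual obstacle here is simply spotting that a maximal petal-disjoint family is automatically a union of sunflowers with cores sitting inside $F$, which lets the ``no large core in $F$'' hypothesis do exactly the work it is tailored for; the rest of the argument is routine counting.
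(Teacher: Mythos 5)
Your proof is correct, and it takes a genuinely different route from the paper's. The paper constructs $D$ as $(\hs(\cH)\setminus F)\,\cup\,\bigcup_{C\subset F}\hs(\cH_C)$, where $\cH_C$ is the auxiliary hypergraph whose hyperedges are the ``petals'' $F'\setminus C$ of edges $F'$ with $F'\cap F=C$; it then proves $\size{\hs(\cH_C)}\leq d\,S_{\cH}(C)$ by a short contradiction argument (a maximal vertex-disjoint subfamily of $\cF(\cH_C)$ larger than $S_{\cH}(C)$ would re-assemble into a too-large sunflower with core $C$), and handles edges disjoint from $F$ via the term $\hs(\cH)\setminus F$, which also imports the hypothesis $\size{\hs(\cH)}\leq k$ into the size bound. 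You instead take a single maximal family $\cF''$ with pairwise disjoint out-of-$F$ parts and let $D$ be the union of those parts; maximality gives the hitting property uniformly (no case split on whether the edge meets $F$), and partitioning $\cF''$ by its trace on $F$ shows directly that each class $\cF''_C$ is itself a sunflower with core $C$, so $\size{\cF''_C}\leq S_{\cH}(C)\leq 10dk$ by $F\in\cF_s$. This sidesteps both the auxiliary hitting-set lemma and the use of $\hs(\cH)$, gives the marginally cleaner bound $10\cdot 2^d d^2 k$ rather than $k+10\cdot 2^d d^2 k$, and exposes more transparently that the only structural fact being used is that no subset of $F$ supports a large sunflower. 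The only cosmetic slip is referring to ``at most $2^d$ proper subsets'' when there are $2^d-1$, but that does not affect the bound.
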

\begin{proof}
 For each non-empty $C \subset F$, consider the hypergraph $\cH_C$  such that $U(\cH_C)=U(\cH) \setminus C$ and $\cF(\cH_C)$ = $\{F' \setminus C : F' \in \cF(\cH)~\mbox{and}~F' \cap F=C \}$.
First, we prove that the size of $\hs(\cH_C)$ is at most $d S_{\cH}(C)$. For the sake of contradiction,
assume that $\size{\hs(\cH_C)} > d  S_{\cH}(C)$. Then we argue that there exists a maximal packing $\cF' \subseteq \cF(\cH_C)$ such that $\size{\cF'} > S_{\cH}(C)$. If $\size{\cF'} \leq S_{\cH}(C)$, then the vertex set $\{w:w \in F', F' \in \cF'\}$ is a hitting set of $\cH_{C}$ and it has size at most $dS_{\cH}(C)$, which is a contradiction. Therefore, there is a maximal packing $\cF' \subseteq \cF(\cH_C)$ and $\size{\cF'} > S_{\cH}(C)$. Observe that the set of edges $\{F''\cup C : F'' \in \cF' \}$ forms a $t$-sunflower in $\cH$ where $t>S_{\cH}(C)$ and this contradicts the definition of $S_{\cH}(C)$.

The required set $D$ is defined as
$$
	D := \left(\hs(\cH) \setminus F\right)  \cup \left(\bigcup\limits_{C \subset F, C \neq \emptyset} \hs(\cH_C)\right).
$$
$D$ is the desired set because of the followings.
\begin{itemize}
\item If a hyperedge  ${F}^*$ in $\cF(\cH) \setminus \{F\}$ does not intersect with $F$, then it must intersect with $\hs(\cH) \setminus F$;
\item If a hyperedge  ${F}^*$ in $\cF(\cH) \setminus \{F\}$ intersects with $F$, then it must intersect with 
$\hs(\cH_C)$ for some non-empty $C \subset F$. So, each hyperedge in $\cF(\cH) \setminus \{F\}$, intersects with $D$.
\end{itemize}
\color{black}
 Now, we bound the size of $D$. 
Since $\size{\hs(\cH)} \leq k$ and $\size{\hs(\cH_C)}\leq dS_{\cH}(C)$, we have
\begin{eqnarray*}
\size{D} \leq \size{\hs(\cH)} + \size{\bigcup\limits_{C \subset F} \hs(\cH_C)} \leq k + \sum\limits_{C \subset F}dS_{\cH}(C)
\leq k + 2^d\cdot d \cdot 10dk \leq 2^{d+5}d^2k.
\end{eqnarray*}
The last inequality follows from the fact that $F$ does not contain any large core.
\end{proof}

With respect to the set $D$, we define another event $\cE_2 \supseteq \cE_1$ and we bound $\pr(\cE_2)$. Let
$$
	\mbox{$\cE_2$: for some $y \in F$ there exists $ z \in D$ such that $h(z)=h(y)$.}
$$
So, $$\pr(\cE_2) \leq d \frac{\size{D}}{\beta  k } = \frac{d \cdot 2^{d+5}d^2k}{\beta k}=\frac{d^3 2^{d+5}}{\beta}< \frac{1}{10}.$$\\
 
The last inequality holds as $\beta=100d^32^{d+5}$. Putting everything together,
\begin{eqnarray*}
\qquad\qquad\qquad \pr(F \notin \cF(\cH_1) ) \leq \sum\limits _{u,v \in F:u \neq v}\pr(h(u) = h(v)) + \pr (\cE_1)
\leq  \frac{d^2}{\beta k }+\pr (\cE_2)
\leq \frac{d^2}{\beta k }+ \frac{1}{10}
< \frac{1}{2}.
\end{eqnarray*}
\end{proof}

\begin{proof}[Proof of Claim~\ref{clm:hs-int-2}]
Without loss of generality, we will prove the statement for the graph $\mathcal{H}_1$. Let $h:U(\mathcal{H}) \rightarrow [\beta k]$ be the random hash function used in the sampling of $\mathcal{H}_1$. 

Let $\cS$ be the sunflower with core $C$ and $\cF'$ be an arbitrary set of $10dk$ hyperedges corresponding to sunflower $\cS$. 
% Before proceeding further, we have the following  observation.
\begin{obs}\label{obs:des}
With probability $3/4$, there exists a partition of $\cF'$ into equivalence classes 
$T_1,\ldots,T_t$ such that
\begin{itemize}
\item $\bigcup\limits_{x \in F_1 \setminus C} \{ h(x) \} = \bigcup\limits_{x \in F_2 \setminus C} \{ h(x) \}$ if $F_1$ and $F_2$ belong to the same equivalence class, and
\item $\left(\bigcup\limits_{x \in F_1 \setminus C} \{ h(x) \} \right) \cap \left(\bigcup\limits_{x \in F_2 \setminus C} \{ h(x) \} \right)=\emptyset$ if $F_1$ and $F_2$ belong to different equivalence classes.
\item $t \geq 2k$.
\color{black}
\end{itemize} 

\end{obs}
\begin{proof}
For $F \in \cF'$, let $X_F$ be the indicator random variable that takes value $1$ if and only if
$\left(\bigcup\limits_{x \in F \setminus C} \{ h(x) \} \right) \cap \left(\bigcup\limits_{x \in F_1 \setminus C} \{ h(x) \} \right) \neq \emptyset$ for some $F_1 \in \cF' \setminus F$. Observe that $t \geq 10dk -X$, where 
 $$
    X=\sum\limits_{F \in \cF'}X_F. 
 $$
Observe
\begin{eqnarray} 
    \pr(X_F=1) &\leq& \sum\limits_{F_1 \in \cF' } \pr\left( \left(\bigcup\limits_{x \in F \setminus C} \{ h(x) \} \right) \cap \left(\bigcup\limits_{x \in F_1 \setminus C}\{ h(x) \} \right) \neq \emptyset \right)\nonumber\\
    &\leq& 10dk\cdot d^2 \cdot \frac{1}{\beta k} \leq \frac{1}{20}.
\end{eqnarray}

\color{black}
So, $\E[X] \leq \frac{1}{20} \cdot 10dk \leq \frac{dk}{2}$. Now,
\begin{align*}
\pr\left(t \leq 2k\right)
&= \pr\left(X \geq 10dk - 2k\right) &\because t = 10dk-X\\
&\leq \frac{\E[X]}{10dk-2k} &\mbox{from Markov Ineqality}\\
&< \frac{1}{4} &\because\E[X] \leq \frac{dk}{2}
\end{align*}
\end{proof}

Let $h(T)=\bigcup\limits_{x \in F}\left\{h(x)\right\}$, where $F \in T$. Let $\cT$ be the equivalence classes such that the following holds for each $T \in \cT$, we have
$$ 
	h(T) \cap \left(\bigcup\limits_{x \in \hs(\cH) \setminus C } \left\{h(x) \right\}
	\right)  =\emptyset.
$$ 
By the fact that $\size{\hs(\cH)} \leq k$ along with Observation~\ref{obs:des}, $\size{\cT} > k$ holds with probability at least $3/4$. Now, consider the following observation:
\begin{obs}
For each $T \in \cT$, there exists a hyperedge $F$ in $\cH_1$ such that $ \bigcup\limits_{x \in F \setminus C} \left\{ h(x) \right\} =h(T)$ with probability at least $1-\frac{1}{100k}$. 
\end{obs}
\begin{proof}
Consider the set of hyperedges 
$$
\cF'' = \left\{F: \bigcup\limits_{x \in F \setminus C}\left\{ h(x) \right\} = h(T) \right\}. 
$$
Any edge outside $\cF''$ has one vertex $z$ such that $z \in \hs(\cH) \setminus C$ or $ h(z) \in h(T')$ for some 
$T' \in \cT \setminus \{T\}$. By the construction of $\cT$ and by the description of the algorithm, there exists a hyperedge $F$ in $\cH_1$ such that $\left(\bigcup\limits_{x \in F \setminus C} \{ h(x) \} \right)=h(T)$ and the following event $\cE$ holds. $\cE: h(u)=h(v)$ if and only if $u=v$ for all $u,v \in F$. 
$$\pr(\cE^c) \leq \sum\limits_{u,v \in F}\pr(h(u)=h(v))\leq \frac{d^2}{\beta k} \leq \frac{1}{100k}.$$
So, $\pr(\cE) \geq 1- \frac{1}{100k}$.
\end{proof}
From the above observation, there exist at least $\size{\cT}$ hyperedges in $\cH_1$ that form a sunflower with $C$ with probability at least $1-\frac{1}{100k}(k+1) \geq \frac{49}{50}$. As $\pr(\size{\cT} > k) \geq \frac{3}{4}$, $S_{\mathcal{H}_1}(C) > k$ holds with probability $\frac{49}{50}\cdot \frac{3}{4}>\frac{1}{2}$.
%
%
%

%\complain{*Gopi: This proof is missing.}
\end{proof}
Now, we have all the ingredients to prove Lemma~\ref{lem:hs_main}.

\begin{proof}[Proof of Lemma~\ref{lem:hs_main}]
First, since $\hat{\mathcal{H}}$ is a subgraph of $\mathcal{H}$, a minimum hitting set of $\mathcal{H}$ is also a hitting set of $\hat{\mathcal{H}}$. To complete the proof of this lemma we only need to show that when $\size{\hs(\cH)} \leq k$ then a minimum hitting set of $\hat{\mathcal{H}}$ is also a hitting set of $\mathcal{H}$. By Lemma~\ref{lem:hs_inter}, it is true that with high probability $\cF_{s} \subseteq \cF(\hat{\mathcal{H}})$ and $S_{\hat{\mathcal{H}}}(C) > k$ if $C \in \cC'$. It is enough to show that when $\cF_{s} \subseteq \cF(\hat{\mathcal{H}})$ and $S_{\hat{\mathcal{H}}}(C) > k, ~ \forall C \in \cC{'}$, then a minimum hitting set of $\hat{\mathcal{H}}$ is also a minimum hitting set of $\mathcal{H}$. 

First we show that each significant core intersects with $\hs(\mathcal{H})$. Suppose there exists a significant core $C$ that does not intersect with $\hs(\mathcal{H})$. Let $\cS$ be a $r$-sunflower in $\mathcal{H}$, $r >k$, such that $C$ is the core of $\cS$.  
Then each of the $r$ petals of $\cS$ must intersect
with $\hs(\mathcal{H})$. But the petals of any sunflower are disjoint. 
This implies $\size{\hs(\mathcal{H})} \geq r > k$, which is a contradiction. So, each significant core intersects with $\hs(\mathcal{H})$. As large cores are significant, each large core also intersects with $\hs(\mathcal{H})$.

Let us construct a sub-hypergraph $\tilde{\cH}_{1}$ of $\cH$ with the following definition: Take a large core $C_1$ in $\mathcal{H}$ that contains a significant core $C_2$ as a subset. Let $\cS_1$ be a sunflower with core $C_1$. Let $\cS_2$ be a sunflower with core $C_2$ that has more than $k$ petals. Note that there can be at most one hyperedge $F_1$ of $\cS_1$ that is also present in $\cS_2$. We delete all hyperedges participating in $\cS_1$ except $F_1$. The remaining hyperedges remain the same as in $\mathcal{H}$. Notice that a hitting set of $\tilde{\cH}_{1}$ is also a hitting set of $\cH$; the significant core $C_2$ remains significant in $\tilde{\cH}_{1}$. Thus, any hitting set of $\tilde{\cH}_{1}$ must intersect with $C$ and therefore, must hit all the hyperedges of $\cS_1$. We can think of this as a reduction rule, where the input hypergraph and the output hypergraph have the same sized minimum hitting sets. Let $\tilde{\cH}$ be a hypergraph obtained after applying the above reduction rule exhaustively on $\cH$. The following properties must hold for $\tilde{\cH}$:
\begin{itemize}
    \item[(i)]
        $\hs(\cH) = \hs(\tilde{\cH})$,
        
    \item[(ii)]
        all large cores in $\tilde{\cH}$ do not contain significant cores as subsets, and
    
    \item[(iii)]
        all hyperedges of $\cF_s$ in $\cH$ are still present in $\tilde{\cH}$.
\end{itemize}

By Lemma~\ref{lem:hs_inter}, with high probability we have $S_{\hat{\mathcal{H}}}(C) > k$ when $C$ is a large core of $\tilde{\cH}$ that does not contain any significant core as a subset. Note that the arguments in Lemma~\ref{lem:hs_inter} can also be made for such large cores without significant cores in $\tilde{\cH}$. Thus, we continue the arguments with the assumption that $S_{\hat{\mathcal{H}}}(C) > k$ when $C$ is a large core of $\tilde{\cH}$ that does not contain any significant core as a subset.

Now we show that when $\size{\hs(\cH)}\leq k$, $\hs(\tilde{\cH})=\hs(\hat{\cH})$. We know that $\cF_{s} \subseteq \cF(\tilde{\mathcal{H}})$. That is, any edge that does not contain any large core as a subset, is present in $\tilde{\cH}$. Each hyperedge in $\cF_{s}$ must be covered by any hitting set of $\mathcal{H}$, as well as any hitting set of $\tilde{\mathcal{H}}$ and $\hat{\mathcal{H}}$. Now, it is enough to argue that a hyperedge $F \in \cF(\tilde{\mathcal{H}}) \setminus \cF_s$, must be covered by any hitting set of $\hat{\mathcal{H}}$. Note that each $F \in \cF(\tilde{\mathcal{H}}) \setminus \cF_s$ contains a large core, say $\hat{C}$, which does not contain a significant core as a subset. By our assumption, $\hat{C}$ is a significant core in $\hat{\mathcal{H}}$ and therefore, must be hit by any hitting set of $\hat{\mathcal{H}}$.

Putting  everything together, when $\size{\hs(\mathcal{H}}) \leq k$, each edge in $\mathcal{H}$ is covered by any hitting set of $\hat{\mathcal{H}}$. Thus, $\hs(\mathcal{H})=\hs(\hat{\mathcal{H}})$.
\end{proof}

\section{Algorithms for \dphs{} 
%(Proof of Theorem~\ref{theo:hs_np_bise}(ii))
}
\label{sec:hs_nonprom}

%\noindent In this Section, we give an algorithm for \dphs, where we have access to the \gpis{} oracle and obtain an algorithm with query complexity $\tOh(k^{2d^2})$. This algorithm, along with the algorithm for \npphs{} that makes $\tOh_d(k^d\log n)$ \gpis{} queries (See Theorem~\ref{theo:hs_np_bise}), implies an algorithm for \dphs{} that makes $\tOh_d\left(k^d\log n, k^{2d^2}\right)$   \gpis{} queries. That is, we will be done with the proof of Theorem~\ref{theo:hs_np_bis}.

We will now prove the following result.
\begin{theo}
\label{theo:hs_np_bis-relaxed}
\dphs{} can be solved with $\tOh_d(k^{2d^2})$ \gpis{} queries with high probability.
\end{theo}
Note that the above result together with the algorithm for \npphs{} that makes $\tOh_d \left( k^d\log n \right)$ \gpis{} 
queries (Theorem~\ref{theo:hs_np_bise}(i)), implies an algorithm for \dphs{} that makes 
$\tOh_d\left( \min\left\{ k^d\log n, k^{2d^2}\right\}\right)$ \gpis{} queries proving the result in Theorem~\ref{theo:hs_np_bise}(ii).

%The main idea is to use the concept of representative sets discussed in Section~\ref{sec:prelim}. By Proposition~\ref{pre:rep-set-system}, the size of a $k$-representative set corresponding to a hypergraph is bounded by $\Oh(k^d)$. Thus, the number of vertices that are present in the $k$-representative set is also bounded by $\Oh(d k^d)$.  All the $\Oh(d k^d)$ vertices will be uniquely colored with high probability if enough number of colors are used for the hash function. Then we make \gpis{} queries to extract a sufficient number of hyperedges such that the hyperedges corresponding to the representative set are \emph{embedded} in the sampled sub-hypergraph. The formal proof of Theorem~\ref{theo:hs_np_bis-relaxed} is discussed below.

\begin{proof}[Proof of Theorem~\ref{theo:hs_np_bis-relaxed}]
By Observation~\ref{obs:const-prob-boost}, it is enough to give an algorithm that solves \dphs{} with probability at least $2/3$ by using
 $\Oh_{d}\left( k^{2d^2} \right)$ \gpis{} queries. 

We choose a random hash function $h:U(\cH)\rightarrow [\gamma k^{2d}]$, where $\gamma =100  9^d d^2$ {(recall from Section~\ref{sec:prelim} that choosing a said random hash function is about coloring the vertices uniformly and independently at random)}. 
Let $U_{i} = \left\{ u \in U(\cH) : h(u) =i \right\}$,
  where $i \in \left[\gamma k^{2d}\right]$. Note that $U_i$s form a partition of $U(\cH)$, where some 
  of the $U_i$s can be empty. We 
make a \gpis{} query with input $(U_{i_1},\ldots,U_{i_d})$ for each $1 \leq i_1< \ldots < i_d \leq \gamma k^{2d}$ such that  $U_{i_{j}} \neq \emptyset$ for all $j \in [d]$. Recall that the output of a \gpis{} query is Yes or No. We create a hypergraph $\hat{\cH}$ where we create a vertex for each part $U_i$, $i \in [\gamma k^{2d}]$. By abuse of notations, we will denote by
$$
    U(\hat{{\cH}}) =\left\{U_1, \ldots,U_{\gamma k^{2d}}\right\}
$$ 
and 
$$
    \cF(\hat{\cH}) = \left\{(U_{i_1},\ldots,U_{i_d}):~\mbox{\gpis{} answers ``yes'' to the input $(U_{i_1},\ldots,U_{i_d})$}\right\}.
$$
Observe that we make $\Oh_{d}\left( k^{2d^2} \right)$ queries 
to the \gpis{} oracle. We find $\hs(\hat{\cH})$ and report $\size{\hs(\cH)} \leq k$ if and only if $\size{\hs(\hat{\cH})} \leq k$. 

For the hitting set $HS({\cH})$, consider the set $S' = \left\{ U_i~\vert~ \exists u\in \hs(\cH), h(u)=i \right\}$. 
Then $S'$ is a hitting set for $\hat{\cH}$. So, $\size{\hs(\hat{\cH})}\leq \size{\hs(\cH)}$, and  
if $\size{\hs(\cH)} \leq k$, then $\size{\hs(\hat{\cH}
)} \leq k$.
Now, the correctness of our query procedure follows directly from the following claim.
\begin{cl}
\label{claim:hs_rep}
 If $\size{\hs(\hat{\cH})} \leq k$, then $\size{\hs(\cH)} \leq k$ with probability at least $2/3$. 
\end{cl}
\noindent
The remaining part of the proof will prove the above claim.

Let $\cR$ be a fixed $k$-representative set corresponding to $\cH$ obtained from Proposition~\ref{pre:rep-set-system}  and let $\mathcal{H}'$ be a set system obtained from $\cR$ as described in Corollary~\ref{coro:rep-set-system}. Consider the set $U(\mathcal{H}')$. Note that $\size{\mathcal{F}(\mathcal{H}')} \leq {{k+d}\choose{d}}$ and 
$\size{U(\mathcal{H}')} \leq d \cdot {{k+d} \choose d}$. Let $\cE_1$ be the event that all the vertices in $U(\mathcal{H}')$
are uniquely colored, i.e., $\cE_1$: $h(u)=h(v)$ if and only if $ u=v$, where $u,v \in U(\mathcal{H}')$.

Now we lower bound the probability of the event $\cE_1$. As usual, let $\cE_1^c$ denote the complement of the event $\cE_1$. Therefore,
$$\pr(\cE_1^c) \leq \sum\limits_{u,v \in U(\mathcal{H}')} \pr(h(u) = h(v)) 
\leq \sum\limits_{u,v \in U(\mathcal{H}')} \frac{1}{\gamma k^{2d}} \leq \frac{\size{U(\mathcal{H}')}^2}{\gamma k^{2d}} < \frac{1}{3}.$$

So, $\pr(\cE_1)  \geq  \frac{2}{3}$. Let ${\sf Prop}$ be the property that for each $F \in \mathcal{F}(\mathcal{H}')$, there is an ``equivalent'' hyperedge in $\cF(\hat{\cH})$. More specifically, ${\sf Prop}$ is the following property: For each $(u_1,\ldots,u_d) \in \mathcal{F}(\mathcal{H}')$, the hyperedge $(U_{h(u_1)},\ldots,U_{h(u_d)})$ belongs to $\cF(\hat{\cH})$ for all $i \in [d]$.

From the definition of the \gpis{} query oracle, observe that the property ${\sf Prop}$ is true whenever the event $\cE_1$ occurs. If we show that the occurrence of ${\sf Prop}$ implies that $\size{\hs(\cH)} \leq k$ if and only if $\size{\hs(\hat{\cH})} \leq k$, we are done.

For the rest of the proof, assume that ${\sf Prop}$ holds. Let us define a function $f:U(\hat{\cH}) \rightarrow U({\mathcal{H}'}) \cup \{\psi\}$ as follows. For each $i \in [\gamma  k^{2d}]$, if $ h(u) =i$ and $u \in U(\mathcal{H}')$, then $f(U_i)=u$. Otherwise, $f(U_i)=\psi$.

Let $\size{\hs(\hat{\cH})} =k' \leq k$. Let $\hs(\hat{\cH})=\{X_1,\ldots,X_{k'}\}\subseteq U(\hat{\mathcal{H}})$. Consider the vertex set $U'=\{f(X_i):i \in [k'],f(X_i) \neq \psi \} \subseteq U(\mathcal{H}')$ which is of size at most $k$. As $\hs(\hat{\cH})$ is a hitting set of $\hat{\cH}$, $U'$ covers all the hyperedges present in $\mathcal{F}(\mathcal{H}')$. Hence by Corollary~\ref{coro:rep-set-system}, $\size{\hs(\cH)} \leq k$.
\end{proof}

\section{Lower bound for \dphs}
\label{sec:lowerbound-restate}
\noindent
We will prove the following result in this Section.

\begin{theo}[Restatement of Theorem~\ref{theo:lowerbound}]
\label{theo:lowerbound-restate}
Let $n,k,d \in \N$ with $d\leq k$ and $n \geq k+d$. Any algorithm, with \gpis{} query access to any hypergraph $\cH$ having $n$ vertices, that decides whether $\hs({\cH})
\leq k$ or $\hs({\cH}) \geq k+1$ with 
probability $2/3$, makes at least $\Omega\left(\binom{k+d}{d}\right)$ queries. 
\end{theo}

\begin{rem}
    The proof of Thoerem~\ref{theo:lowerbound-restate} can be directly adapted for \gpise{} query, i.e., we can show
    that any algorithm with \gpise{} query access to any hypergraph $\cH$ that decides whether $\hs({\cH}) \leq k$ or $\hs({\cH}) \geq k+1$, with 
    probability $2/3$, makes at least $\Omega\left(\binom{k+d}{d}\right)$ queries.
\end{rem}

We use the framework by Eden and Rosenbaum~\cite{DBLP:conf/approx/EdenR18} to prove the above theorem via a reduction from $\mbox{{\sc Disjointness}}_{N}
$ problem in the Yao's two party communication model. 
Given two vectors ${\bf x}$ and ${\bf y}$ in $\{0,1\}^{N}$, we say ${\bf x}$ and ${\bf y}$
{\em intersect} if there exists $i \in [N]$ such that $x_{i} = y_{i} = 1$.\footnote{For a vector ${\bf z}\in \{0,1\}^{N}$, 
$z_{i}$ denotes the $i$-th coordinate of the vector ${\bf z}$.} Otherwise, we say 
${\bf x}$ and ${\bf y}$ are {\em disjoint}.
In the $
\mbox{{\sc Disjointness}}_N$ problem, we have two players Alice and Bob, where Alice has a vector ${\bf x} \in \{0,1\}^{N}$ and Bob has a vector ${\bf y} \in \{0,1\}^{N}$. Note that Alice does not know about Bob's vector and Bob does not know about Alice's vector. The goal of the $\mbox{{\sc Disjointness}}_N$ problem is for Alice and Bob to communicate bits between each other following a pre-decided protocol in order to decide if ${\bf x}$ and ${\bf y}$ intersect or not. 
The {\em communication complexity} of $\mbox{{\sc 
Disjointness}}_N$ is defined as the minimum number of bits communicated 
between Alice and Bob, by the best protocol in the worst case, to solve 
$\mbox{{\sc Disjointness}}_N$ with probability at least $2/3$~\cite{DBLP:books/daglib/0011756}. It is well known that the communication complexity of $\mbox{{\sc Disjointness}}_N$ is $\Omega(N)
$~\cite{DBLP:books/daglib/0011756}. The lower bound holds even if it is known from beforehand that either ${\bf x}$ and ${\bf y}$ are disjoint, or there exists exactly one $i\in [N]$ such that $x_{i}=y_{i}=1$,
see~\cite{DBLP:books/daglib/0011756}.

\begin{proof}[Proof of Theorem~\ref{theo:lowerbound-restate}]
Let ${\bf x} \in \{0,1\}^{N}$ and ${\bf y} \in \{0,1\}^{N}$, where $N = \binom{k+d}{d}$
be the inputs of Alice and Bob, respectively. Moreover, assume that either ${\bf x}$ and ${\bf y}$ 
are disjoint or there exists exactly one $i\in [N]$ 
such that $x_i=y_i=1$.  Fix a bijection $\phi:N\rightarrow 
\Sigma_{d}$, where $\Sigma_{d}$ denote the collection of all $d$-sized subsets of $[k+d]$. Let $\cH({\bf x},{\bf y})$ 
be the hypergraph (with $
[n]$ as the vertex set), that can be uniquely determined from ${\bf x}$ 
and ${\bf y}$, having hyperedges according to the following rule: for each $i\in 
[N]$, the particular combination of $\binom{k+d}{d}$ indicated by $\phi(i)$ is a hyperedge in $\cH({\bf x},{\bf y})$ if at least one of $x_i$ and $y_i$ is $0$. 
Note that no hyperedge in $\cH({\bf x},{\bf y})$
contains a vertex from $[n]\setminus [k+d]$. Moreover, $\cH({\bf x},{\bf 
y})$ satisfies the following observation because of the particular nature of ${\bf x}$ and ${\bf y}$.
\begin{obs}\label{obs:hard-inst-graph}
\begin{itemize}
\item[(i)] 
  There exists at most one $d$-sized subset of $[k+d]$ that is not a hyperedge in $\cH({\bf x},{\bf y})$.
\item[(ii)] 
  If ${\bf x}$ and ${\bf y}$ are disjoint then each $d$-sized subset of $[k+d]$ is a hyperedge in $\cH({\bf x},{\bf y})$, and
  therefore the minimum size of any hitting set of $\cH({\bf x},{\bf y})$ is $k+1$. Otherwise, there is exactly one $d$-sized subset 
  of $[k+d]$ that is not a hyperedge in $\cH({\bf x},{\bf y})$, and therefore $\cH({\bf x},{\bf y})$ has a hitting set of size $k$.
\end{itemize}
\end{obs}
The above observation follows from the construction of $\cH({\bf x},{\bf y})$ along with the fact that either 
${\bf x}$ and ${\bf y}$ are disjoint or there exists exactly one $i\in [N]$ such that $x_i=y_i=1$.
\remove{\begin{proof}
If ${\bf x}$ and ${\bf y}$ are disjoint, then either $x_i=0$ or $y_i=0$ for each $i\in [N]$. Then by description of $\cH({\bf x},{\bf y})$, each $d$-sized subsets in $S_d$ forms a hyperedge in $\cH({\bf x},{\bf y})$. As no hyperedge contains any vertex from $[n]\setminus [k]$, we can assume that any (minimal) hitting set of $\cH({\bf x},{\bf y})$ is
\end{proof}}

To reach a contradiction, assume that there exists an algorithm {\sc Alg} that makes $o\left(\binom{k+d}{d}\right)$ \gpis{} queries to $\cH({\bf x},{\bf y})$ and decides whether $\hs(\cH({\bf x},{\bf y}))\leq k$ or $\hs(\cH({\bf x},{\bf y}))= k+1$. 
Now we give a protocol for $\mbox{{\sc Disjointness}}_N$ with $o\left( \binom{k+d}{d} \right)=o\left( N \right)$ bits of communication. Alice and Bob run {\sc Alg} on $\cH({\bf x},{\bf y})$. Let {\sc Alg} asks for a \gpis{} query with input 
$A_1,\ldots,A_d$. Note that $A_1,\ldots,A_d$ are non-empty and pairwise disjoint. Without loss of generality, 
we can assume that $A_1,\ldots,A_d \subset [k+d]$ as no hyperedge in $\cH({\bf x},{\bf y})$ contains any vertex from 
$[n]\setminus [k+d]$. Now, we describe how Alice and Bob simulate each \gpis{} query by communicating at most $2$ bits.

\begin{description}

\item {\bf At least one $A_i$ has at least two vertices from $[k+d]$:} 
  By Observation~\ref{obs:hard-inst-graph} (i), in this case, there exists a 
  hyperedge having a vertex in each $A_i$. So, Alice and Bob can answer to 
  any such \gpis{} query without any communication. 

\item {\bf Each $A_i$ is a set of singleton vertex from $[k+d]$:} 
  In this case, Alice and Bob need to determine whether the vertices in $A=\bigcup\limits_{i=1}^d A_i \subseteq [k+d]$ 
  form a hyperedge in $\cH({\bf x},{\bf y})$. Let $j=\phi^{-1}\left(A\right)$. From the description of $\cH({\bf x},{\bf y})$, $A$ 
  is a hyperedge if and only if at least one of $x_j$ and $y_j$ is $0$. So, Alice and Bob can know the answer to any such \gpis{} query by communicating their bits at $j$-th index, which is 2 bits of communication.

\end{description}

Hence, Alice and Bob can simulate algorithm {\sc Alg} by using $o(N)$ bits of communication. After simulating {\sc Alg}, 
Alice and Bob reports ${\bf x}$ and ${\bf y}$ intersect if {\sc Alg} reports that $\hs(\cH({\bf x},{\bf y})) \leq k$. Otherwise, 
if {\sc Alg} reports that $\hs(\cH({\bf x},{\bf y})) =k+1$, Alice and Bob report ${\bf x}$ and ${\bf y}$ are disjoint. The 
correctness of the protocol for $\mbox{{\sc 
Disjointness}}_N$ follows from the existence of algorithm {\sc Alg} and Observation~\ref{obs:hard-inst-graph} (ii).
\end{proof}
\color{black}

\section{Conclusion}
\label{sec:cut}
\noindent
In this paper, we proved that the query complexities of \dphs{} and \npphs{} problems, using GPIS query, to be $\tOh_{d}\left( \min\left\{ k^{d} \log n, k^{2d^{2}} \right\}\right)$ and $\tOh_{d} \left( k^{d} \log n\right)$, respectively. We have also considered \gpise{} query oracle, which is basically a \gpis{} query oracle that also provides existence of an edge with an arbitrary example, and therefore is a stronger query oracle than \gpis{}. We showed that both \dphs{} and \npphs{} can be solved by using $\tOh_d(k^d)$ \gpise{} queries. To complement our upper bounds,
we proved an almost matching lower bound of $\Omega\left( \binom{k+d}{d}\right)$ \gpise{}\footnote{The lower bound is proved for \gpis{} query oracle but can be directly adapted for \gpise{} query oracle.} queries for both of these problems.

We think that the $\log n$ term in the query complexity of \dphs{} is not required, and therefore we believe that the query complexity of \dphs{} using GPIS query should be $\widetilde{\Theta}_{d} \left( k^{d}\right)$.
Unlike the \dphs{} problem, we believe that the query complexity of \npphs{} using \gpis{} query should be $\widetilde{\Theta}_{d} \left( k^{d} \log n \right)$. If this is indeed true then there is a {\emph separation} between \gpis{} and \gpise{} query oracles, as we have already showed that the query complexity of \npphs{} using \gpise{} query is $\widetilde{\Theta}_d\left(k^d\right)$. 

Ron and Tsur~\cite{RonT16} studied the power of an example in the context of classical subset/group queries. They showed separation results between the classical subset queries and subset queries with example. If our hunch about \npphs{} is true with respect to GPIS query, then this would be another problem showing the power of an example in the context of GPIS query, a particular kind of subset query.

\bibliographystyle{alpha}
\bibliography{reference}

%\appendix
%\input{vertexcov.tex}
%\input{VC_nonpromise.tex}

\end{document}